%% file: main.tex
\documentclass[journal]{IEEEtran}

\input{config/packages}

\input{config/macros}

\begin{document}

\title{Identification of $dq$-Asymmetric Impedances as Complex Transfer Functions Using a Single Arbitrary Excitation}
\author{Mohamed~Abdalmoaty,~\IEEEmembership{Member,~IEEE,}
        Zheran~Zeng,~\IEEEmembership{Graduate Student Member,~IEEE,}\\
        Dongsheng~Yang,~\IEEEmembership{Senior Member,~IEEE,}
        and~Florian~D\"orfler,~\IEEEmembership{Fellow,~IEEE}%
\thanks{M.~Abdalmoaty and F.~D\"orfler are with the Automatic Control Laboratory,
ETH Zurich, 8092 Zurich, Switzerland (e-mail: mabdalmoaty@ethz.ch).}%
\thanks{Z.~Zeng and D.~Yang are with the Department of Electrical Engineering, Eindhoven
University of Technology, Eindhoven, The Netherlands.}%
\thanks{A preliminary part of the methods underlying this work was presented at the 24th Power Systems
Computation Conference \cite{Abdalmoaty2027epsr}.}}

\maketitle

\begin{abstract}
Cross-coupling between the $dq$ coordinates makes the identification of asymmetric grid
impedances a challenging problem, particularly near the fundamental frequency
where the asymmetric coupling is strongest. Existing schemes usually handle it either by
perturbing the two coordinates sequentially, which lengthens the measurement, or by
using a time-domain method with a global parametric model whose order must be tuned. This paper
develops an active non-parametric frequency-domain method that avoids both. The
equivalent impedance is parameterized by a pair of single-input single-output
complex transfer functions. Each spectral line is fitted with a local
rational model; the leakage and transient contributions are estimated, so that neither periodic steady-state
excitation nor repeated excitation cycles are required. We give the exact finite-time discrete
Fourier transform relation for the conjugate-coupled complex-signal model, and 
analyse the distortion that a stationary-frame filter
placed ahead of the Park transform imposes on the identified pair. The method is
validated on a controller hardware-in-the-loop platform against an analytically
derived small-signal model, for a symmetric grid and for the same grid
with an added grid-following converter that renders it asymmetric. Both complex transfer functions and all 
four real transfer functions of the $dq$ impedance are
recovered over a wide band from a single one-second record of a random
excitation, at \SI{1}{\Hz}  resolution.
\end{abstract}

\begin{IEEEkeywords}
Complex transfer function, data-driven method, $dq$-impedance,
frequency scan, grid-converter interaction.
\end{IEEEkeywords}

\input{sections/introduction}
\input{sections/formulation}
\input{sections/identification}
\input{sections/measurement}
\input{sections/hil}
\input{sections/conclusions}

\appendices
\input{sections/appendix}

\bibliographystyle{IEEEtran}
\bibliography{config/IEEEabrv,refs}

\end{document}

%% file: config/packages.tex
\usepackage{comment}

\usepackage{graphicx}
\usepackage{epstopdf}

\newif\ifshowtodos
\showtodostrue
\ifshowtodos
  \usepackage[textsize=footnotesize]{todonotes}
\else
  \usepackage[disable]{todonotes}
\fi

\usepackage{cite}

\usepackage[dvipsnames]{xcolor}
\usepackage{pgfplots}
\usetikzlibrary{backgrounds}
\usepgfplotslibrary{colorbrewer}
\usepgfplotslibrary{groupplots}
\pgfplotsset{compat = 1.16, cycle list/Set1-8}
\usetikzlibrary{pgfplots.statistics}
\usepackage{pgfplotstable}

\usetikzlibrary{external}
\tikzset{external/force remake=false}
\tikzexternaldisable

\pgfplotsset{
    boxplot/draw/average/.code={%
        \color{.!0!black}
        \draw[/pgfplots/boxplot/every average/.try]
            \pgfextra
            \pgftransformshift{%
                \pgfplotsboxplotpointabbox
                    {\pgfplotsboxplotvalue{average}}
                    {0.5}%
            }%
            \pgfuseplotmark{\tikz@plot@mark}%
            \endpgfextra
        ;
    },
}

\definecolor{uugreen}{HTML}{14B03D}
\definecolor{uured}{RGB}{191,45,56}
\definecolor{uublue}{RGB}{0, 106, 178}
\definecolor{backgroundcolor}{rgb}{0.08, 0.38, 0.74}
\definecolor{greenBackgroundcolor}{rgb}{0.4660, 0.6740, 0.1880}
\definecolor{backgroundcolor2}{rgb}{0, 0.4470, 0.7410}
\definecolor{orangecolor}{rgb}{0.85, 0.325, 0.098}

\usetikzlibrary{circuits.ee.IEC}
\usetikzlibrary{arrows}
\tikzstyle{roundnode} =[circle, draw=blue!60, fill=blue!5, scale = 0.5]
\usetikzlibrary{positioning}
\usetikzlibrary{arrows.meta}

\usepackage{environ}
\makeatletter
\newsavebox{\measure@tikzpicture}
\NewEnviron{scaletikzpicturetowidth}[1]{%
  \def\tikz@width{#1}%
  \def\tikzscale{1}\begin{lrbox}{\measure@tikzpicture}%
  \BODY
  \end{lrbox}%
  \pgfmathparse{#1/\wd\measure@tikzpicture}%
  \edef\tikzscale{\pgfmathresult}%
  \BODY
}
\makeatother

\usepackage{dblfloatfix}

\usepackage[cmex10]{amsmath}
\usepackage{amssymb}
\usepackage{amsthm}
\theoremstyle{definition}
\newtheorem{theorem}{\bf Theorem}
\newtheorem{lemma}{\bf Lemma}
\newtheorem{proposition}{\bf Proposition}

\newtheorem{remark}{\bf Remark}

\usepackage{siunitx}
\usepackage{booktabs}
\usepackage{makecell}
\usepackage{multirow}
\usepackage{multicol}

\usepackage[T1]{fontenc}
\usepackage{newtxtext}
\usepackage[caption=false,font=footnotesize]{subfig}
\usepackage{stackengine}
\usepackage[bottom]{footmisc}
\usepackage[hidelinks]{hyperref}

\pgfplotsset{
  hilaxis/.style={
    grid=major, grid style={gray!15},
    minor tick num=1,
    mark size=1.5pt,
    line width=.05pt,
    yticklabel style={font=\scriptsize},
    xticklabel style={font=\scriptsize},
    ylabel style={font=\scriptsize, yshift=-0.2cm},
    xlabel style={font=\scriptsize, yshift=0.1cm},
    title style={font=\small},
    legend style={font=\footnotesize, draw=none, fill=none,
                  at={(0.5,1.03)}, anchor=south, legend columns=2,
                  column sep=0.3ex},
    scaled x ticks=false, scaled y ticks=false,
  },
  truecurve/.style={black!90, line width=1pt, mark=none},
  estmarks/.style={backgroundcolor, only marks, mark=x, line width=0.5pt},
}

%% file: sections/introduction.tex
\section{Introduction}\label{sec:intro}

\IEEEPARstart{T}{he} displacement of synchronous generation by
converter-interfaced resources has left modern grids with dynamics that are both
faster and far less well documented than those of the systems they replace.
Subsystem interactions can vary with operating point and fixed analytical models
quickly cease to describe the system \cite{Wang2019}. The difficulty is
aggravated by how little model information is shared between stakeholders:
manufacturers treat converter models as proprietary, network operators
frequently hold only steady-state or aggregated representations and are reluctant to share even these, and large flexible loads such as data centers contribute volatile dynamics that are so far not well characterized.

Measuring the impedance using data-driven methods is a promising alternative
\cite{DeMeerendre2020}; it yields a model of the local dynamics without
requiring disclosure of protected information or the knowledge of what devices
are actually connected behind the point of interest. The uses of such a model
are broad. Evaluated at the fundamental frequency $\omega_g$ it informs 
voltage-stability margins, transfer limits and grid-strength indicators, and when
characterized across a band it supports harmonic studies, filter and controller
design, the diagnosis of subsynchronous and inter-area phenomena, and the tuning
of grid-connected converter controls
\cite{Harnefors2007a,cespedes2014adaptive,Wang2014,Wang2018}. Since the
interaction between converters and the grid can degrade power quality and even
trigger instabilities \cite{Mollerstedt2000,Liserre2006,li2017unstable},
impedance-based criteria have become a standard tool for small-signal
stability assessment \cite{belkhayat1997,sun2009small,Rygg2016,Chen2024}. In a
three-phase system the object of interest, $Z_g(s)$, is a multi-input
multi-output transfer function (TF) between small-signal terminal
voltages and currents at the point of common coupling (PCC), most often written
in a synchronous $dq$ frame. \vspace{-0.7cm}

\subsection{Identification methods and their limitations}

Two decades of work, mostly in the power-electronics literature, have produced a
large number of identification schemes  \cite{Stiegler2015,DeMeerendre2020}.
They can be categorized into \emph{passive}, \emph{quasi-passive} and
\emph{active} families. \emph{Passive} schemes exploit the harmonic distortion
already present at the PCC \cite{Gu2012,Hoffmann2014}. Because the distortion
usually sits at only a handful of harmonics, the estimate is available only on a
sparse set of frequencies --- adequate for tracking a short-circuit level, but
not for the wideband characterization that stability analysis or control design
requires. \emph{Quasi-passive} schemes pair a triggering mechanism
\cite{Garcia2014,Cobreces2009} with an active measurement so that the grid is
disturbed only when a change is detected; they inherit whatever limitations that
measurement has. \emph{Active} schemes create the excitation deliberately,
either by switching resistive or capacitive loads \cite{Girgis1989,Jordan2018}
or by injecting perturbations through dedicated equipment
\cite{francis2011algorithm,huang2009small,Rygg2016}: frequency sweeps
\cite{huang2009small} step a sinusoid through the band, while wideband injection \cite{Rygg2016} excites many
frequencies at once. Both evaluate the measurement by steady-state Fourier
analysis and require purpose-built hardware and long records. This limits their application to commissioning and laboratory use.

A more practical approach is to use existing grid-connected converters to
excite the grid, by superimposing an excitation signal on an
inner control loop reference.
Impulse excitation \cite{cespedes2012online,liu2020analysis} achieves a very
short perturbation, but concentrates the injected energy into a
brief, large transient that can degrade power quality, depart from linear behavior, or trip protection devices. Lower-amplitude maximum-length binary
sequences (MLBS) avoid that
\cite{martin2013wide,riccobono2017noninvasive,luhtala2018implementation,roinila2017mimo},
at the cost of a structural restriction: either the impedance is assumed
$dq$-symmetric and the cross-coupling effects are neglected (not
justified in a converter-dominated grid), or two linearly independent
perturbations are applied one after the other, which doubles the record and
requires the grid to be unchanged between them.

Fast impedance measurement is crucial in practical power grids. A small-signal model is obtained by linearization at a single operating point, yet real grids rarely hold a constant operating point for long, due to the intermittency of renewable generation. The concern is most acute at low frequencies, where measurement via classical techniques may require several periods of a periodic steady-state excitation. 
There have been a few works trying to address this. For example, the two sequential
perturbations are avoided in
\cite{Berg2022} by injecting two orthogonal binary sequences at once, an 
MLBS on one axis and the inverse-repeat sequence
derived from it on the other. The  two sequences
excite interleaved subsets of the frequency grid, so an interpolation step is
needed to bring the second channel onto the frequencies of the first. Because
the inverse-repeat sequence is twice as long as the sequence it is derived
from, a data record delivering a frequency resolution $f_\mathrm{res}$ must span
$2/f_\mathrm{res}$, the same duration as the two sequential
perturbations.  This ensures that both axes are measured under
identical operating conditions. Because the analysis uses a steady-state discrete
 Fourier transform (DFT), the record needs to be long enough, usually containing
 an integer number of periods.  A different route is taken in \cite{Haberle2023}, using discrete-time
auto-regressive exogenous (ARX) models \cite{ljung1998system}. These need no
special excitation and tolerate non-periodic data. Similarly, a time-domain method was proposed
in \cite{gong2020dq} for converter identification. However, these methods require a
global parametric model whose order must be selected carefully, and their accuracy
degrades markedly when that choice is wrong.

A different concern, common to all methods, is that synchronous-$dq$
identification requires the phase angle of the PCC voltage. This angle is usually
obtained from a phase-locked loop (PLL), whose dynamics may bias the result.
The bias can be corrected via high-pass filtering \cite{gong2018impact} or using a known PLL model
\cite{shen2013analysis}. 
Here it is avoided altogether,
since at a constant grid frequency the fundamental phase is needed only as a
static reference.

\vspace{-0.2cm}

\subsection{Contributions}
This paper develops an active method for
$dq$-asymmetric grid models requiring neither sequential nor
steady-state perturbation. The key contributions are summarized as:

\begin{enumerate} 
\item A {single-record} identification framework is developed for 
 $dq$-asymmetric impedances: random excitation is applied once and the response is captured in one triggered acquisition, eliminating the need for sequential injections of 
linearly independent perturbation signals.

\item An exact finite-record DFT relation is established for arbitrary 
non-periodic data. The transient term associated with finite-record truncation
 and spectral leakage is incorporated into the  model, thereby removing
 the requirement for periodic steady-state measurement.

\item A complex transfer function formulation separates the direct and conjugate-coupling dynamics,
 providing an explicit characterization of $dq$-asymmetry. 
This representation  enables an algebraically compact estimation of the
 dynamics, the transient term,  and straightforward correction of distortions caused by stationary-frame filtering ahead of the Park transform.

\end{enumerate}

Note that the grid is  still assumed to operate in a steady state about a fixed stable operating point, as in any
small-signal formulation; what is not required is that the \emph{excitation} be
in periodic steady state while measuring the system's response.  The record length is set only by the desired frequency
resolution and the signal-to-noise ratio (SNR).

The remainder of the paper is organized as follows.
Section~\ref{sec:formulation} describes the system and formulates the
identification problem. Section~\ref{sec:analysis} presents the proposed
method: the exact finite-time DFT relation, the local rational model estimator, and its
identifiability and excitation requirements. Section~\ref{sec:measurement}
analyses the measurement chain, in particular the distortion imposed by an
anti-alias/decimation filter placed ahead of the Park transform, and the choice
of reference frame. Section~\ref{sec:hil} reports the experimental validation
on a controller hardware-in-the-loop platform, and Section~\ref{sec:conclusions}
concludes the paper.

%% file: sections/formulation.tex
\section{System Description}\label{sec:formulation}

\begin{figure}[t]
    \centering 
    \input{figures/inverter_scheme_small}
        \vspace{-0.8cm}
    \caption{Grid-connected converter system with excitation in the control loop.}
    \label{figure:fig1}
\end{figure}

\subsection{Small-signal model}

The quantity to be identified is the dynamic small-signal Th\'{e}venin
equivalent impedance of an AC three-wire, three-phase grid, reconstructed from
time-domain samples of the terminal voltages $v_{abc}$ and currents $i_{abc}$ at
a PCC of interest; see Fig.~\ref{figure:fig1}. Nothing is assumed about what
lies behind the PCC, which may contain machines, passive loads and
actively controlled power-electronic devices in any combination. Assuming balanced
operation, Park's transformation evaluated at the steady-state grid
frequency $\omega_g$ carries the phase quantities into constant $dq$-coordinates
and thereby supplies a constant  operating point about which the system is linearized.

The continuous-time small-signal model is the set of four real SISO transfer operators relating
the $dq$ small-signal currents to the corresponding voltages, \vspace{-0.1cm}
\[
\begin{bmatrix}
    \Delta v_d(t)\\
    \Delta  v_q(t)
\end{bmatrix} =
\overbrace{\begin{bmatrix}
    Z_{dd}(\p) & Z_{dq}(\p)\\
    Z_{qd}(\p) & Z_{qq}(\p)\\
\end{bmatrix}}^{=:Z_g(\p)}
\begin{bmatrix}
    \Delta  i_d(t)\\
    \Delta  i_q(t)
\end{bmatrix},
\]
in which $Z_g(\p)$ is a real $2\times2$ array of transfer operators,
$\p\!=\!\frac{\diff}{\diff t}$ is the derivative operator, and $\Delta$ marks the small deviation from the
steady-state value. In practice, the small-signal quantities are obtained by
removing the mean of the time series, or equivalently by discarding the DC bin
of their DFT. Collect the two axes into complex variables, and define 
\[
\begin{aligned}
\bv(t) &:= \Delta v_d(t) + j \Delta v_q(t), \\
\bi(t) &:= \Delta i_d(t) + j \Delta i_q(t).
\end{aligned}
\]
Straightforward algebraic manipulations \cite{Martin2004} then show that
\begingroup\makeatletter\def\f@size{9.5}\check@mathfonts
\begin{equation}\label{eq:complex_model}
\bv(t) = \bG_+(\p) \bi(t) + \bG_-(\p) \bi^\ast(t),
\end{equation}
where $\bi^\ast$ denotes the complex conjugate of $\bi$ and
\begin{equation}\label{eq:complex_tfs}
\begin{aligned}
   \bG_+(\p) &=  0.5 \big({Z_{dd}(\p) + Z_{qq}(\p)} + j ({Z_{qd}(\p)  - Z_{dq}(\p)}) \big),   \\ 
   \bG_-(\p) &= 0.5 \big({Z_{dd}(\p) - Z_{qq}(\p)} + j({Z_{dq}(\p)  + Z_{qd}(\p)})\big),
\end{aligned}
\end{equation}
\endgroup
are two SISO complex transfer operators.  The subscripts indicate the direction of rotation: a $dq$ perturbation $\bi(t) = \eu^{j\omega t}$ rotates positively, and its conjugate negatively. The TF representations follow by Laplace transformation,
under which $\p$ becomes $s$, and the frequency response is recovered at
$s = j\omega$, where $j$ is the imaginary unit. Note that
\eqref{eq:complex_model} is not $\mathbb{C}$-linear:
the conjugate $\bi^\ast$ enters through its own channel, and it is precisely
this conjugate coupling that causes the $dq$-asymmetry. 

When $Z_g(\p)$ is symmetric, i.e.,  $Z_{dd}(\p) = Z_{qq}(\p) = G_d(\p)$ and
$Z_{qd}(\p) = -Z_{dq}(\p) = G_q(\p)$, the second channel vanishes,
$\bG_-(\p) = 0$, and \eqref{eq:complex_model} reduces to
$\bv(t) = \bG(\p) \bi(t)$~with
\begin{equation}
\label{eq:single_complex_tf}
\bG(\p) = G_d(\p) + j G_q(\p) = \bG_+(\p).
\end{equation}
Accordingly, hereafter, $G_+(p)$ is referred to as the direct TF term, as it represents the direct $\bi(t)$-to-$\bv(t)$ channel, whereas $G_-(p)$ is referred to as the coupling TF term, as it multiplies the conjugate $\bi^*(t)$ and vanishes in the symmetric case.

A single SISO complex TF then represents $Z_g$ completely, and one set of
periodic measurements suffices to estimate it non-parametrically. In the asymmetric case, $\bi^\ast$ is always present, producing a model with
twice as many unknowns as data equations, which cannot be resolved from one
measurement set unless something further is assumed about the frequency
response. That asymmetry arises naturally in a grid containing actively controlled
devices. It is generated by the control loops themselves, synchronization and
outer loops acting differently on the two mirror frequencies. This is the same
mechanism that motivates the modified sequence-domain and unified-impedance
representations \cite{rygg2016modified,Harnefors2007,Wang2018,Harnefors2020}.

\subsection{Grid-connected converter}

A grid-connected voltage-sourced converter with an $LCL$ filter serves as the
excitation source throughout, as drawn in Fig.~\ref{figure:fig1}; the DC link is
treated as ideal. Current control is carried out in the $dq$ frame,
synchronization is provided by a PLL locked to the filter capacitor voltage, providing phase angle $\theta_{\mathrm{pll}}$, and
the current-controller output $(i_\mathrm{c,D}, i_\mathrm{c,Q})$ becomes the converter voltage reference $(u_\mathrm{c,D}^\mathrm{ref}, u_\mathrm{c,Q}^\mathrm{ref})$  realized
through PWM. Outer power loops are also present 
 but omitted from the figure.

Excitation is introduced by superimposing a wideband signal on the converter
voltage references which allows for a higher excitation bandwidth than if added to the current reference $(i_\mathrm{c,D}^\mathrm{ref}, i_\mathrm{c,Q}^\mathrm{ref})$. The signal used here is a zero-mean random binary sequence (RBS) \cite{ljung1998system}. Unlike an MLBS, it is aperiodic and may be truncated at any length. The proposed approach does not depend on a specific excitation signal. Any signal which is sufficiently exciting in the
sense of Section~\ref{sec:excitation} can be used, that is, provided the phase of the
resulting spectrum varies widely over frequency. This leaves room
to shape the excitation, for instance to place more of the injected energy where
the response is most needed, at a given amplitude and record length. The design of the excitation spectrum is
outside the scope of this work.

\subsection{Measurement setup}
\begin{figure}[t]
    \centering
\resizebox {0.8\linewidth} {!} {
	\begin{scaletikzpicturetowidth}{0.9\linewidth}
	\input{figures/CL-block-diagram}%
	\end{scaletikzpicturetowidth}
	}
    \caption{Measurement setup. Voltage and current noise represent errors due to inaccuracies of the measurement devices $S_i$ and $S_v$. The converter's switching harmonics act as an additional excitation signal. The switches represent signal samplers, and the voltage and current noise represent measurement noises.}
    \label{figure:measurment_setup}
\end{figure}

Fig.~\ref{figure:measurment_setup} shows the assumed measurement configuration. Recording of
$v_{abc}(t)$ and $i_{abc}(t)$ begins when the excitation is applied and runs for
its duration. Anti-alias filtering ahead of
the sampler is necessary, and the two channels must be synchronized and
relatively calibrated so that the transducer responses $S_i$ and $S_v$ can be accounted for.
Sampling (indicated using switches in Fig.~\ref{figure:measurment_setup}) is uniform and fast enough for the band of interest, for which the
converter's own period $T_s$ (typically the switching period or half of it
\cite{Buso2022}) is a natural choice. Park's transformation then maps the
samples into the synchronous frame. Section~\ref{sec:measurement} analyzes and mitigates two nuisances of this signal processing pipeline: placing the anti-alias filter \emph{before} the Park
transform, and fixing the frame in which the estimate is expressed.

\subsection{Choice of reference frame}\label{sec:frames}

Park's transformation requires the phase angle of the grid fundamental voltage, which is usually obtained from a PLL, and the dynamics of the PLL can distort the measurement if they are not explicitly accounted for. To bypass this, some works identify the impedance in the modified sequence domain or in the stationary frame with an augmented model. At a constant grid frequency, however, the $dq$ frame impedance and these two alternatives are connected by linear transformations
\cite{rygg2016modified,Harnefors2007,Wang2018}, carry identical information, require the same fundamental phase information and
differ only in how it enters the model: as the frame alignment itself in $dq$; as an explicit $\eu^{j2\theta}$ factor in the augmented stationary model; as a DFT window synchronized to the fundamental in the modified sequence domain. Identification in the $dq$ frame with a static phase alignment reduces the dependence to a single angle, obtained as described in
Section~\ref{sec:alignment}.

%% file: figures/inverter_scheme_small.tex
\resizebox {\columnwidth} {!} {

\begin{tikzpicture}[circuit ee IEC,scale=0.45, every node/.style={scale=0.57}]
\draw (2.5,3.6) -- (3.5,3.6);
\draw (2.5,1.6) -- (3.5,1.6);
\draw  (3.5,3.8) node (v13) {} rectangle (5.3,1.4);
\draw (4.6,3.5) -- (4.6,3.1) -- (4.1,2.85) node (v3) {};
\draw[-latex](4.1,2.35) -- (4.6,2.1) ;
\draw (4.6,2.1) -- (4.6,1.7);
\draw (4.1,3.1) -- (4.1,2.1);
\draw (4,3.1)--(4,2.1);
\draw (3.7,2.6) -- (4,2.6);
\draw (6,2.6) node (v7) {} to [inductor={yscale=1.5}] (8,2.6) node (v8) {};
\draw (8.9,2.6) node (v7) {} to [inductor={yscale=1.5}] (10.9,2.6) node (v8) {};
\draw (5.3,2.6) -- (6.1,2.6) node (v1) {}; 
\draw (10.6,2.6) -- (13,2.6) -- (14.4,2.6);
\node at (2.7,3.3) {$+$};
\node at (2.7,1.8) {$-$};
\node at (2.8,2.6) {$v_\mathrm{dc}$};
\node at (7,3.2) {$L_\mathrm{f,1}$};
\draw(2.45,3.6) circle (0.6 mm); 
\draw(2.45,1.6) circle (0.6 mm); 
\draw[-latex] (4.4,0.6) -- (4.4,1.4);
\draw[rounded corners=3,fill=black!7,draw=black!55]  (3.5,0.6) rectangle (5.3,-0.2);
\node at (4.4,0.2) {PWM};
\node at (4.4,4.15) {VSC};
\draw [-latex](4.4,-1.4) -- (4.4,-0.2);
\draw  (3.7,-1.4) rectangle (5.1,-2.3);
\draw (3.7,-2.3) -- (5.1,-1.4);
\node [scale= 0.9] at (4.1,-1.65) {abc};
\node [scale = 0.7] at (4.7,-2.05) {DQ};
\draw[-latex] (4,-4) -- (4,-2.3); 
\draw[-latex] (4.8,-3.3) -- (4.8,-2.3);
\draw (4.8,-3.45) circle(1.5mm);
\draw (4,-4.15) circle(1.5mm);
\draw[-latex] (6.2,-3.45) -- (4.95,-3.45);
\draw [-latex] (6.2,-4.15)--(4.15,-4.15) ;
\draw[-latex,red!90!black] (4,-4.9) -- (4,-4.3); 
\draw [-latex,red!90!black](4.8,-4.9) -- (4.8,-3.6);
\node [red!90!black] at (4.4,-5.2) {\emph{excitation signal}};
\draw [rounded corners = 3,fill=black!7,draw=black!55] (6.2,-3.1) rectangle (9.9,-4.5);
\draw (8.4,2.6) -- (8.4,1.95);
\fill (8.4,2.6) circle(0.6mm);
\draw (8.1,1.95) -- (8.7,1.95); 
\draw (8.1,1.8) -- (8.7,1.8); 
\draw (8.4,1.8) -- (8.4,1.2);
\draw (8.25,1.2) -- (8.55,1.2);
\draw (8.35,1.15) -- (8.45,1.15);
\node at (7.7,1.85) {$C_\mathrm{f}$};

\draw[color=black]  (5.6,2.6) ellipse (0.05 and 0.15);

\draw  (6.2,-1.4) rectangle (7.6,-2.3);
\draw[-latex,color=black] (5.6,2.45) --(5.6,-1.9)-- (6.2,-1.9);
\draw (6.2,-2.3) -- (7.6,-1.4);
\draw [-latex] (3,-1.85) -- (3.7,-1.85);
\node at (2.55,-1.85) {$\theta_\mathrm{pll}$};
\node [scale=0.9] at (6.6,-1.65) {abc};
\node [scale=0.7] at (7.2,-2.05) {DQ};

\node[color=black] at (6.25,-0.7) {$i_\mathrm{c,abc}$};
\draw [-latex,color=black] (8.9,2.6) -- (8.9,-1.4);

\fill [color=black](8.9,2.6) circle(0.6mm);
\draw [rounded corners = 3,fill=black!7,draw=black!55] (8.5,-1.4) rectangle (9.9,-2.3);
\node at (9.2,-1.85) {PLL};
\draw[-latex] (6.4,-2.3) -- (6.4,-3.1); 
\draw [-latex](7.4,-2.3) -- (7.4,-3.1);
\node at (6.9,-2.75) {$i_\mathrm{c,D}$};
\node at (7.9,-2.75) {$i_\mathrm{c,Q}$};
\draw [-latex](8.5,-1.85) -- (7.6,-1.85);
\node at (8.1,-1.5) {$\theta_\mathrm{pll}$};
\node at (8.1,-3.55) {$DQ$-current};
\node [color=black] at (8.2,-0.7) {$v_\mathrm{c,abc}$};
\node at (8.1,-4.05) {control};
\draw[-latex] (10.6,-3.45) -- (9.9,-3.45);
\draw [-latex](10.6,-4.15) -- (9.9,-4.15);
\node at (11.1,-3.45) {$i_\mathrm{c,D}^\mathrm{ref}$};
\node at (3.5,-2.75) {$u_\mathrm{c,Q}^\mathrm{ref}$};
\node at (5.35,-2.75) {$u_\mathrm{c,D}^\mathrm{ref}$};
\node at (11.1,-4.15) {$i_\mathrm{c,Q}^\mathrm{ref}$};

\draw [color=backgroundcolor] (12.5,2.6) ellipse (0.05 and 0.15);
\draw[-latex,color=backgroundcolor] (12.5,2.45) -- (12.5,2) -- (12.5,0.4);
\draw [-latex,color=backgroundcolor](11.5,2.6) -- (11.5,0.4);
\fill[color=backgroundcolor] (11.5,2.6) circle(0.6mm);
\node [color=backgroundcolor] at (12,-1.65) {measured};
\node  [color=backgroundcolor] at (12,-2.05) {data};
\draw[fill=greenBackgroundcolor!30,color=greenBackgroundcolor!30,opacity=0.8] (13.5,1.6) node (v2) {} -- (13.5,3.8) -- (15.3,3.8) -- (15.3,4.1) -- (15.8,3.65) -- (15.3,3.2) -- (15.3,3.5) -- (13.8,3.5) -- (13.8,1.6) -- (13.5,1.6);
\node [color=greenBackgroundcolor] at (14.55,4.15) {$Z_\mathrm{g}(s)$};

\node[backgroundcolor] at (12.05,2.2) {$i_\mathrm{abc}$};

\draw  (14.4,2.8) rectangle (15.5,2.4);
\draw (15.5,2.6) -- (16.2,2.6) -- (16.2,2.2);
\draw  (16.2,1.9) ellipse (0.3 and 0.3);
\draw (16.2,1.6) -- (16.2,1.2);
\draw (16.05,1.2) -- (16.35,1.2);
\draw (16.15,1.15) -- (16.25,1.15);
\draw  [densely dashed, rounded corners =5,color=greenBackgroundcolor, thick](14.1,3) rectangle (16.7,0.9);
\node [color=greenBackgroundcolor] at (15.4,0.55) {three-phase};
\node [color=greenBackgroundcolor] at (15.4,0.1) {grid};
\draw  plot[smooth, tension=.7] coordinates {(16,1.85) (16.1,2.05) (16.3,1.75) (16.4,1.95)};
\draw (13,2.8) -- (13,2.4);
\node at (12.85,3.15) {PCC};

\draw (4.6,1.9) -- (4.9,1.9) -- (4.9,2.5); 
\draw (4.9,2.8) -- (4.9,3.3) -- (4.6,3.3); 
\draw (4.75,2.8) -- (5.05,2.8); 
\draw (4.9,2.8) -- (4.7,2.5) -- (5.1,2.5) -- (4.9,2.8);

\draw (7.8,2.6) -- (8.9,2.6);
\node at (9.9,3.2) {$L_\mathrm{f,2}$};

\node [backgroundcolor] at (11,2.2) {$v_\mathrm{abc}$};
\draw  [color=backgroundcolor](11.3,0.4) rectangle (12.7,-0.5);
\draw [color=backgroundcolor](11.3,-0.5) -- (12.7,0.4);
\node  [scale=0.9,backgroundcolor] at (11.7,0.15) {abc};
\node  [scale=0.9, backgroundcolor] at (12.3,-0.2) {dq};
\draw [-latex,backgroundcolor](11.5,-0.5) -- (11.5,-1.3); 
\draw [-latex,backgroundcolor](12.5,-0.5) -- (12.5,-1.3);
\node [backgroundcolor]at (11,-0.8) {$v_\mathrm{dq}$};
\node [backgroundcolor]at (12.1,-0.8) {$i_\mathrm{dq}$};
\draw [-latex,backgroundcolor](10.6,-0.05) -- (11.3,-0.05);
\node [backgroundcolor]at (10.3,-0.05) {$\theta$};
\end{tikzpicture}

}

%% file: figures/CL-block-diagram.tex
\begin{tikzpicture}[ auto, >=latex, scale = \tikzscale,
block/.style={draw,rectangle,minimum height=1cm,minimum width=1.0cm, rounded corners},
link/.style = {->},
knot/.style={draw,circle,inner sep=0pt,minimum size=0.1mm,fill=black,draw=black},
sum/.style={draw,circle,inner sep=0mm,minimum size=0.02}, 
small_block/.style={draw,rectangle,thick,minimum height=1em,minimum width=1em, rounded corners},
]
\node[block,align=center] (converter)              {Converter \& \\$LCL$ filter};
\node[block,align=center] (converter)              {Converter \& \\$LCL$ filter};
\node[sum]                (converter_harmonics) at ($(converter.west) - (1.3,0)$) {+};
\node[block,align=center,fill= greenBackgroundcolor!12, thick] (grid)                at ($(converter.north) + (0,1.3)$) { Grid\\ $\bG_+, \,\bG_-$};
\node[small_block,align=center, color = backgroundcolor] (sensor_Gv)      at ($(converter.east) + (2,-1)$) { $S_v$};
\node[small_block,align=center, color = backgroundcolor] (sensor_Gi)      at ($(converter.west) - (3.3,1)$) { $S_i$};
\node[sum, color = backgroundcolor]                (voltage_noise)       at ($(sensor_Gv.south) - (0,0.6)$) {+};
\node[sum, color = backgroundcolor]                (current_noise)       at ($(sensor_Gi.south) - (0,0.6)$) {+};
\node[knot, color = backgroundcolor]               (sensor_i) at ($(converter.west) - (3.3,0)$) {.};
\node[knot, color = backgroundcolor]               (sensor_v) at ($(converter.east) + (2,0)$) {.};

\draw[link, color = red!90!black] ($(converter.south) - (0,0.6)$) -- node [below, pos = 0.25, font = \small, align = center] {excitation\\ signal} (converter.south);
\draw[link] (converter.west) -- (converter_harmonics.east);
\draw[link] (converter_harmonics.west) -- ($ (converter_harmonics.west) - (2.89,0) $) |- (grid.west);
\draw[link] (grid.east) -- ($ (grid.east) + (2.8,0) $) |- (converter.east);
\draw[link] ($(converter_harmonics.north) - (0,1)$) -- node [pos = 0.3, below, align=center, font = \scriptsize] {switching\\harmonics}  (converter_harmonics.south) ;
\draw[link, color = backgroundcolor] (sensor_v) -- (sensor_Gv) -- (voltage_noise);
\draw[link, color = backgroundcolor] (sensor_i) -- (sensor_Gi) -- (current_noise);

\draw[link, color = backgroundcolor] ($(voltage_noise) - (1,0)$) -- node [midway,below, align=center, font = \tiny] {voltage \\ noise}(voltage_noise);
\draw[link, color = backgroundcolor] ($(current_noise) - (1,0)$) -- node [midway,below, align=center, font = \tiny] {current \\ noise}(current_noise);

\draw[thick, color = backgroundcolor] (current_noise) -- ($(current_noise) - (0,0.7)$);
\draw[thick, color = backgroundcolor] ($(current_noise) - (-0.2,0.7)$) -- node [at end, right, font = \scriptsize] {$T_s$} ($(current_noise) - (0,1)$);
\draw[thick, color = backgroundcolor] (voltage_noise) --  ($(voltage_noise) - (0,0.7)$);
\draw[thick, color = backgroundcolor] ($(voltage_noise) - (-0.2,0.7)$) --  node [at end, right, font = \scriptsize] {$T_s$} ($(voltage_noise) - (0,1)$);
\draw[thick, color = backgroundcolor] ($(current_noise) - (0,1)$) -- ($(current_noise) - (0,1.5)$);
\draw[thick, color = backgroundcolor] ($(voltage_noise) - (0,1)$) -- ($(voltage_noise) - (0,1.5)$);

\draw[rounded corners=4pt, densely dashed, gray, color = backgroundcolor]  ($(voltage_noise) - (1.1,1.3)$) rectangle++(1.75,2.85);
\draw[rounded corners=4pt, densely dashed, gray, color = backgroundcolor] ($(current_noise) - (1.1,1.3)$) rectangle ++(1.75,2.85);
\draw[dashed, thin, gray] ($(converter.north) + (-6.3,0.25)$) -- node [ pos = 0.15, below] {$i_{abc}(t)$}  node [ pos = 0.88, below] {$v_{abc}(t)$}  node [ pos = 0.25] {PCC} ($(converter.north) + (4.5,0.25)$);
\node[ font = \scriptsize, align = center, backgroundcolor] (text) at ($(converter.south) - (1,2.2)$)  {\emph{synchronized} \\ \emph{acquisition channels}};
 \draw[->, shorten <= -3mm, shorten >= 4mm, >= Straight Barb, backgroundcolor] (text) -- (current_noise);
 \draw[->, shorten <= -3mm, shorten >= 8mm,>=Straight Barb, backgroundcolor] (text) -- (voltage_noise);
\end{tikzpicture}

%% file: sections/identification.tex
\section{Impedance Identification Method}\label{sec:analysis}

Given a data set of a single arbitrary record,
\begin{equation}\label{eq:dataset}
D_N :=\{ (\bv(t_n), \bi(t_n)),  n \in \{0, \dots, N-1\} \},
\end{equation}
the goal is to construct a \emph{non-parametric} estimator of
$\bG_+(j\omega), \bG_-(j\omega)$ on the uniform grid
$\omega_k\! =\! \frac{2\pi k}{NT_s}$, viz.
\[
D_N \, \mapsto \, \{ (\, \widehat{\bG_+(j\omega_k)},\; \widehat{\bG_-(j\omega_k)}\,), \;\, k \!\in\!\{ 0, \dots, N-1\} \}.
\]
To this end, let $\bV\!_k$ and $\bI_k$ denote the $N$-point DFT of
$\{ \bv(t_n) \}$ and $\{ \bi(t_n) \}$,
\begin{equation}\label{eq:dft_spectrum}
\bV\!_k = \frac{1}{\sqrt{N}} \sum_{n=0}^{N-1} \bv(t_n) \eu^{-j\omega_k n T_s}, \quad k \in \{0, \dots, N-1\},
\end{equation}
and similarly for $\bI_k$. Recall that the $N$-point DFT can be computed very
efficiently using fast Fourier transform algorithms, and that the spectra are
periodic with period $N$, namely $\bI_{k+N} = \bI_{k}$. Because the time-domain
signals here are complex-valued, the Hermitian symmetry property of the DFT does
not hold, but it still satisfies conjugacy and reversal properties, that is, the
DFT of $\{\bi^\ast(t_n)\}$ is given by $\{\bI^\ast_{(N-k)_N}\}$ where $(N-k)_N$
stands for ``$N-k$ modulo $N$''. This index is equal to $k'$ in the interval $0$
to $N-1$ that can be written as $k = k' + \nu N$ where $\nu$ is an integer;  see Appendix~\ref{app:conj}.

\subsection{Relation between voltage and current DFT spectra}

The starting point is the following exact relation, which requires neither
periodicity of the data nor any approximation.

\begin{theorem}\label{thm:dft}
Consider the model (1). Let $\bG_+$ and $\bG_-$ be stable and causal, and let $\bv$ and $\bi$ be
observed over the finite time interval $[0, NT_s]$, so that $\bV\!_k$ and
$\bI_k$ in \eqref{eq:dft_spectrum} are the DFT spectra of that record. Then, for
every $k \in \{0,\dots,N-1\}$,
\begin{equation}\label{eq:dft_relation}
\bV\!_k = \bG_+(j\omega_k) \bI_k + \bG_-(j\omega_k) \bI_{(N-k)_N}^\ast + \bT(j\omega_k),
\end{equation} 
where the transient term $\bT(j\omega_k)$ accounts for the spectral leakage
which arises from the mismatch in initial and final conditions, and for the
aliasing effects due to the truncation to $[0, NT_s]$. It collects the
unforced responses of $\bG_+, \bG_-$.  
\end{theorem}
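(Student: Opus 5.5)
Since \eqref{eq:complex_model} is a sum of two channels, I will split the voltage as $\bv = \bv_+ + \bv_-$ with $\bv_+ = \bG_+(\p)\bi$ and $\bv_- = \bG_-(\p)\bi^\ast$, and treat each as a SISO (complex-coefficient) LTI system driven by a known input. The DFT is linear, so $\bV_k = \bV_{+,k} + \bV_{-,k}$, and it suffices to prove the single-channel statement: for a stable causal transfer operator $\bH$ with input $\bu$ and output $\by = \bH(\p)\bu$ observed on $[0,NT_s]$,
\[
\bY_k = \bH(j\omega_k)\bU_k + \bT_H(j\omega_k),
\]
where $\bT_H$ collects the unforced (initial/final condition) response. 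Applying this with $(\bH,\bu) = (\bG_+,\bi)$ and with $(\bH,\bu)=(\bG_-,\bi^\ast)$, and then using the conjugation--reversal property of the DFT (Appendix~\ref{app:conj}) to identify the DFT of $\{\bi^\ast(t_n)\}$ with $\bI^\ast_{(N-k)_N}$, gives \eqref{eq:dft_relation} with $\bT = \bT_{G_+} + \bT_{G_-}$.

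For the single-channel relation I would take a state-space realization $\dot x = A x + B\bu$, $\by = C x + D\bu$, with complex $A,B,C,D$ allowed (a rational, stable, causal $\bH$ always has one). Then I would write the exact sampled relation. Under the intersample behaviour of the excitation, assumed zero-order-hold as generated by the converter, or band-limited, the sampled system is a discrete-time system $x_{n+1} = A_d x_n + B_d \bu_n$, $\by_n = C x_n + D\bu_n$. Multiplying the state equation by $\eu^{-j\omega_k n T_s}/\sqrt N$ and summing over $n=0,\dots,N-1$ gives a telescoping sum. Because $\eu^{-j\omega_k N T_s}=1$ on the DFT grid, this yields
\[
(\eu^{j\omega_k T_s} I - A_d) X_k = B_d \bU_k + \tfrac{1}{\sqrt N}\eu^{j\omega_k T_s}(x_0 - x_N).
\]
Hence $\bY_k = \bH_d(\eu^{j\omega_kT_s})\bU_k + C(\eu^{j\omega_kT_s}I-A_d)^{-1}\eu^{j\omega_k T_s}(x_0-x_N)/\sqrt N$. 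The second term is a rational function of frequency whose only inputs are $x_0 - x_N$, i.e.\ the unforced response, which is the leakage/transient term. The inverse exists for every $k$ because stability puts the eigenvalues of $A_d$ strictly inside the unit circle.

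The main obstacle is reconciling the discrete-time $\bH_d(\eu^{j\omega_k T_s})$ with the continuous-time $\bH(j\omega_k)$ claimed in the statement. The difference is the aliasing/intersample error, and the statement folds it into $\bT$ as well. I would handle it by writing $\bH_d(\eu^{j\omega_k T_s})\bU_k = \bH(j\omega_k)\bU_k + [\bH_d - \bH](j\omega_k)\bU_k$ and defining the bracketed contribution as part of $\bT$. Alternatively, for band-limited signals, I would invoke the standard continuous-time version in which the exact Fourier relation on $[0,NT_s]$ produces the same telescoping boundary term from $x(0)-x(NT_s)$ and the DFT-versus-Fourier-integral discrepancy is an alias term. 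I would state that $\bT$ is smooth in $\omega$, being a rational function of $\eu^{j\omega T_s}$ plus aliasing residuals. That smoothness is what the later local rational modelling exploits.
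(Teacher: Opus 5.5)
You have a genuine gap at the step meant to deliver the statement. Your channel split, the conjugation--reversal step and the discrete-time telescoping identity are all fine. But what that identity gives exactly is $\bV_{+,k} = H_d(\eu^{j\omega_k T_s})\bI_k + (\text{boundary term})$, with $H_d$ the ZOH equivalent of $\bG_+$. You then move $[H_d(\eu^{j\omega_k T_s}) - \bG_+(j\omega_k)]\bI_k$ into $\bT$. With that convention \eqref{eq:dft_relation} holds for any residual whatsoever, and the content of the theorem is lost: that $\bT$ consists of unforced (initial/final-condition) responses and truncation-induced aliasing.

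The term you absorb is a forced, in-band response proportional to $\bI_k$:
\begin{itemize}
\item it is of order one rather than $\mathcal{O}(N^{-1/2})$, so it does not shrink with record length;
\item it inherits the irregular line-to-line phase of the current spectrum, so your closing claim that $\bT$ is smooth fails for your own $\bT$;
\item in the local model \eqref{eq:localmodel} such a term is captured by $\bB_k^+$, not by $\bC_k$, so it is not a transient in the sense the theorem is used.
\end{itemize}
The premise is also off. The ZOH holds for the converter's excitation, but the input to $\bG_\pm$ is the PCC current $\bi$ (and $\bi^\ast$), a continuous closed-loop response through the $LCL$ filter. Hence $x_{n+1} = A_d x_n + B_d\bi(t_n)$ is not exact. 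Band-limited signals do not yield a finite-dimensional recursion in the samples either, so the ``or band-limited'' case does not rescue it.

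The missing idea is to telescope in continuous time. Your last-sentence alternative points this way, and it is the paper's route. For the finite Fourier integrals on $[0,T]$ with $T = NT_s$, the state equation gives $(j\omega_k I - A)X(\omega_k) = B\,\bI(\omega_k) + x(0) - x(T)$, because $\eu^{-j\omega_k T} = 1$. The matrix is invertible by stability, and this holds for any intersample behaviour of $\bi$. The paper reaches the same point by splitting the convolution at $t=0$ and using causality: $\bV(\omega) = \bG(j\omega)\bI(\omega) - \bV_{\text{fin}}(\omega) + \bV_{\text{init}}(\omega)$. The continuous-time $\bG(j\omega_k)$ thus appears directly, and the remainder depends only on initial and final conditions.

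Sampling enters only afterwards, through Poisson's summation formula linking the finite Fourier integral to the DFT. This produces the end-point correction $(\bv(0)-\bv(T))/(2\sqrt{N})$ and image terms at $\omega_k - n\omega_s$, $n\neq 0$. For an anti-alias-filtered current these images are the folded truncation tails, hence boundary-driven, not an in-band input-proportional mismatch. The $\bG_-$ channel then goes through as in your plan. Use that the finite Fourier integral of $\bi^\ast$ at $\omega$ is $[\bI(-\omega)]^\ast$, and that $-\omega_k$ is the line $(N-k)_N$.
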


The explicit derivation of (6) is given in Appendix~\ref{app:dft}. The transient term decays at a
rate $\mathcal{O}(N^{-\frac{1}{2}})$, similar to the real-signal case, as shown in \cite[Thm.~2.1]{ljung1998system} and
\cite{Pintelon1997}. Here, we establish the extension of the relation to complex, conjugate-coupled signals, and in particular the reversed and
conjugated spectrum $\bI_{(N-k)_N}^\ast$ in the second channel.

Notice that the TFs $\bG_+$ and $\bG_-$  are built as linear combinations of the same four real TFs in \eqref{eq:complex_tfs}, so their poles are among those of $Z_g$. The two sets need not coincide; indeed for a $dq$-symmetric system, $\bG_- = 0$. 

Relation \eqref{eq:dft_relation} holds at each spectral line on its own, and the
method proposed in the sequel is built on it. It returns an estimate of $\bG_+(j\omega_k)$ and $\bG_-(j\omega_k)$ at every spectral line $\omega_k$ separately, and the estimates at one line are
tied to the estimates at another only through the record $D_N$ from which both are
computed. Nothing is assumed about the number or the locations of the poles
 of $Z_g$, its relative degree, or the topology of the network
behind the PCC: no global TF of fixed order is assumed or fitted to the whole band $[0, \omega_{N-1}]$. The method delivers a list of complex numbers corresponding to the estimates on the grid $\omega_k$, of the same kind as those delivered by a frequency sweep. This is known as \emph{non-parametric} estimation in the system identification literature \cite{ljung1998system}.

Identifying the equivalent impedance from a single \textit{short} measurement
cycle using \eqref{eq:dft_relation} presents two challenges. The first stems
from the error introduced by the transient term $\bT$, which does not vanish for
a short record; applying a window suppresses it only in part, and may distort the
resonances that are to be resolved. The second is that \eqref{eq:dft_relation}
provides $N$ complex data equations, while the responses to be determined amount
to $2N$ complex unknowns, so a single record leaves the problem
underdetermined. 

Classical impedance measurement addresses both issues, but at the cost of an extended
measurement time. The first is usually addressed by waiting for the transient to decay
and recording an integer number of periods. The second is addressed by a second
measurement with an independent perturbation
\cite{francis2011algorithm}, which can be a single tone \cite{huang2009small} or
a wideband \cite{shen2013three,Rygg2016}. However, these solutions lengthen the measurement time and require the grid to be unchanged throughout.

The method proposed here addresses the two challenges without lengthening the
record. The first is addressed by estimating $\bT$ together with the complex TFs instead of suppressing it or waiting for it to decay. The second is addressed by local parametric modeling, which provides the missing equations by using neighboring spectral lines.

\subsection{Local parametric modeling} 

To obtain a non-parametric estimate at a frequency $\omega_k$, the complex
TFs $\bG_+(j\omega_k), \bG_-(j\omega_k)$ and $\bT(j\omega_k)$ are approximated over a short
frequency range around $\omega_k$ using a parametric model of low order.
Every line in that range then contributes an equation to the estimate at
$\omega_k$. Estimates at different frequencies are correlated only via raw data, and
therefore the method remains truly non-parametric in nature. This is the idea of
local parametric modeling \cite{Pintelon2010,McKelvey2012,Pintelon2021}.

By Theorem~\ref{thm:dft}, the poles of  $\bG_+$, $\bG_-$ and $\bT$ are confined to those of $Z_g(s)$.
Therefore, for a local frequency interval $[\omega_{k-\ell}, \omega_{k+\ell}]$ centered on $\omega_k$ with a chosen radius $\ell$, we approximate
\begingroup\makeatletter\def\f@size{9.5}\check@mathfonts
\begin{equation}\label{eq:LPM}
\begin{aligned}
        \bG_+(j\omega_{k+r}) &\approx \frac{\bB_k^+(r)}{\bA_k(r)}, \quad \bG_-(j\omega_{k+r}) \approx \frac{\bB_k^-(r)}{\bA_k(r)}\\
          \bT(j\omega_{k+r}) &\approx \frac{\bC_k(r)}{\bA_k(r)},
\end{aligned}
\end{equation}
\endgroup
where $r \in \{-\ell, \dots, \ell\}$ and $\bA_k, \bB^+_k, \bB^-_k, \bC_k$ are
complex polynomials in $r$ of degree $R$, parameterized as
\begin{equation}\label{eq:Apoly}
\bA_k(r) := \sum_{s=0}^R a_s(k)\, r^{s}, \qquad a_s(k) \in \mathbb{C},
\end{equation}
and similarly for the other polynomials with coefficients $b^+_s(k), b^-_s(k)$, and $c_s(k)$, respectively. 
Using a single fixed degree for all polynomials across $k$ simplifies the approach while effectively producing a local rational model of order $R$.

The use of a common denominator  $\bA_k$ is justified by Theorem~\ref{thm:dft}, as it needs only to account for the poles of $Z_g$ lying near the interval. The implicit assumption here is that a rational model of degree $R$ can closely approximate the three functions over the interval $[\omega_{k-\ell}, \omega_{k+\ell}]$.

Estimates at $\omega_k$ are obtained by evaluating \eqref{eq:LPM} at the center
of the interval, $r = 0$. Because the three ratios in \eqref{eq:LPM} are
unchanged when $\bA_k$, $\bB^+_k$, $\bB^-_k$ and $\bC_k$ are scaled by a common
complex factor, we may set $a_0(k) = 1$ for every $k$, which removes that
freedom and makes the denominator unity at $r = 0$. With this parameterization, the vector of parameters to be estimated is
\begin{align*}
&\theta_k :=\left[\begin{matrix}
    a_1(k) \cdots a_R(k) & b_0^+(k)    \cdots\end{matrix}\right. \\
& \hspace{2cm}  \left.\begin{matrix} \cdots b_R^+(k) & b_0^-(k)\cdots  b_R^-(k) & c_0(k) \cdots c_R(k) \end{matrix}\right].
\end{align*}
It contains $4R + 3$ unknown complex parameters. From \eqref{eq:dft_relation}
and \eqref{eq:LPM}, the local model linking the DFT spectra is
\begin{equation}\label{eq:nonlinear}
\begin{aligned}
    \bV\!_{k+r} = \,&
    \frac{\bB_k^+(r)}{\bA_k(r)} \bI_{k+r}
    + \frac{\bB_k^-(r)}{\bA_k(r)} \bI^\ast_{(N-k-r)_N}\\
    & + \frac{\bC_k(r)}{\bA_k(r)} + \widetilde{\bE}_{k+r},
\end{aligned}
\end{equation}
where $\widetilde{\bE}_{k+r}$ collects the local parametric interpolation error
of the three approximations in \eqref{eq:LPM}. This relation is nonlinear in the
parameters. Multiplying it by $\bA_k(r)$ gives
\begin{equation}\label{eq:localmodel}
\begin{aligned}
 \!\!\!\!\!   \bA_k(r) \bV\!_{k+r} = \,&
    \bB_k^+(r) \bI_{k+r}
    + \bB_k^-(r) \bI^\ast_{(N-k-r)_N}\\& + \bC_k(r) + \bE_{k+r},
\end{aligned}
\end{equation}
which is linear in the
parameters. The parameters for each $k$ can then be estimated by minimizing
$\bE_{k+r}$ in the least squares sense; this corresponds to fitting a
\textit{local} ARX model to the spectra only over
$[\omega_{k-\ell}, \omega_{k+\ell}]$.

The distinction from a global parametric model is worth making explicit, since a
parametric model is fitted at every line. A global model is a model valid for \textit{all} frequencies: one TF of a chosen order is fitted to the whole band of interest, whose
order fixes how many poles can be represented anywhere in the band. A poorly chosen order results in an error that propagates to every frequency. The model in
\eqref{eq:LPM} is a model of the frequency response over
$[\omega_{k-\ell}, \omega_{k+\ell}]$, and is used to estimate the response only at the central frequency $\omega_k$. It is discarded once the parameters at
$\omega_k$ have been estimated, and a separate (local) model is fitted at the next line. The parameter
$R$ therefore sets how much curvature the three functions may exhibit across one
short interval, rather than how many poles $Z_g$ is allowed to have.
Section~\ref{sec:order} examines the sensitivity of the estimates to this choice.

Define the complex column vectors
\[
\begin{aligned}
\bY\!_k &:= \begin{bmatrix}
\bV\!_{k-\ell}\!\! &\dots &\!\!\bV\!_{k+\ell}
\end{bmatrix}^\top\!\!, \quad
\bU^+_k := \begin{bmatrix}
\bI_{k-\ell} \!\!&\dots\!\! &\bI_{k+\ell}
\end{bmatrix}^\top\!\!,\\
\bU^-_k &:= \begin{bmatrix}
\bI^\ast_{(N-k+\ell)_N} &\dots &\bI^\ast_{(N-k-\ell)_N}
\end{bmatrix}^\top,\\
\bEps\!_k &:= \begin{bmatrix}
\bE_{k-\ell}\!\! &\dots &\!\!\bE_{k+\ell}
\end{bmatrix}^\top\!\!,
\end{aligned}
\]
and the Vandermonde-type matrices
\[
\widetilde{\Phi} := \begin{bmatrix}\tilde\phi(-\ell) & \cdots & \tilde\phi(\ell)\end{bmatrix}^\top \!\!\in \real^{(2\ell+1)\times R},
\qquad
\Phi := \begin{bmatrix}\1 & \widetilde{\Phi}\end{bmatrix},
\]
with $\tilde{\phi}(r) := \begin{bmatrix} r & r^2&  \dots& r^R\end{bmatrix}^\top$.
The data matrix $\Phi_k$ relating the vector of parameters $\theta_k$ and voltage DFT spectra $\bY_k$, as 
$\bY_k = \Phi_k \theta_k + \bEps_k$, is  given by
\begin{equation}\label{eq:datamatrix}
\begin{aligned}
&\Phi_k := \left[\begin{matrix}
    -\diag(\bY\!_k)\,\widetilde{\Phi}  &\diag(\bU^+_k)\,\Phi  \;  \cdots\end{matrix}\right. \\
& \hspace{4cm}  \left.\begin{matrix} \cdots \;     \diag(\bU^-_k)\,\Phi &    \Phi \end{matrix}\right],
\end{aligned} 
\end{equation}
a complex $(2\ell+1)\times(4R+3)$ matrix whose four blocks correspond to the
coefficients of $\bA_k, \bB^+_k, \bB^-_k$ and $\bC_k$ in $\theta_k$,
respectively. The parameter estimate is then obtained by solving the linear
least-squares problem $\min_\theta\| \bY\!_k - \Phi_k \theta  \|$, with the 
closed-form solution
\begin{equation}\label{eq:LS_sol}
\hat{\theta}_k = \Phi_k^\dag \bY\!_k.
\end{equation}
The matrix $\Phi_k^\dag$ denotes the pseudo-inverse of $\Phi_k$, which may be computed using a singular value decomposition with an appropriate scaling to improve numerical conditioning. Setting $r = 0$ in \eqref{eq:LPM}, every polynomial reduces to its constant coefficient, so that
$\bA_k = 1$ by normalization and $\bB^\pm_k = b^\pm_0(k)$, and the two ratios collapse to their numerator constants. This gives the estimators
\begin{equation}\label{eq:estimators}
 \widehat{\bG_+(j\omega_{k})} = \widehat{b_0^+(k)}, \qquad \widehat{\bG_-(j\omega_{k})} = \widehat{b_0^-(k)},
\end{equation}
where $\widehat{b_0^+(k)}$ and $\widehat{b_0^-(k)}$ are extracted from
$\hat{\theta}_k$. The remaining $4R+1$ parameters are discarded and not reported.

\begin{remark} The method estimates the response at each frequency $\omega_k$ separately. The least-squares problems are formed separately over
$k$, and the computations may be parallelized.
\end{remark}

\subsection{Identifiability and excitation}\label{sec:excitation}

The solution \eqref{eq:LS_sol} is unique if and only if $\Phi_k$ has full column
rank. This imposes two conditions: one on the width of the local frequency
interval, and one on the local spectra $\bY\!_k$, $\bU^+_k$ and $\bU^-_k$.

The first is a counting condition. There are $4R+3$ complex parameters and
$2\ell+1$ complex data equations, so that necessarily
\begin{equation}\label{eq:count}
2\ell+1 \geq 4R+3 .
\end{equation}

The second condition concerns the data, and is the frequency-domain counterpart
of the classical requirement for identification in the presence of unknown
initial conditions \cite{Mathew1972}. If $\Phi_k$ is rank deficient, the local
spectra satisfy a polynomial relation of degree at most $R$ over the interval;
conversely, rank is lost whenever the local current spectrum and its mirror
satisfy
\begin{equation}\label{eq:ratform}
\begin{aligned}
\beta^+(r) \bI_{k+r} +& \beta^-(r) \bI^\ast_{(N-k-r)_N} + \gamma(r) = 0, \;\\
 r &\in \{-\ell,\dots,\ell\},
\end{aligned}
\end{equation}
for complex polynomials $\beta^+, \beta^-, \gamma$ of degree at most $R$, not all
zero, in which case the columns of $\Phi_k$ corresponding to the input are not
distinguishable from those corresponding to $\bT$. The measurement is sufficiently
exciting at $\omega_k$ in the specific sense that its spectrum admits no such
description, and Appendix~\ref{app:rank} gives the precise statement and its
proof.

\smallskip

\subsubsection*{The RBS excitation}
The spectrum of an RBS excitation has  a phase that varies irregularly from one spectral line to the
next (it is random in $[-\pi, \pi]$).  The spectrum of the current $\bi$ at the PCC is fixed by the response of the closed loop to the RBS
realization. A polynomial of degree $R$ is fixed by $R+1$
values, so a relation of the form \eqref{eq:ratform} would require the remaining
$2\ell-R$ lines of the interval to satisfy the same polynomial, which does
not occur for a spectrum of this kind due to the complexity of the closed loop system. The counting condition \eqref{eq:count} leads to the same conclusion: a wider interval brings in more non-zero spectral lines, and
each of them is one more constraint that a degenerate relation would have to
satisfy.

\smallskip
\subsubsection*{The estimate at $\omega_g$} One spectral line requires separate treatment. In the $dq$ frame this is the
line at $\omega = 0$, which corresponds to the fundamental $\omega_g$ in $abc$
coordinates, and which is not excited in the data.  At $\omega = 0$, the DFT line is its own mirror, $(N-k)_N = k$, so the direct and conjugate regressors there are $\bI_0$  and $\bI_0^*$. Because the mean has been removed from the records, both $\bV_0$ and $\bI_0$ vanish, so the corresponding row of \eqref{eq:datamatrix} reduces to $c_0 = 0$: it carries no information about either channel, and it forces the constant coefficient of $\bC_k$ to zero, thereby biasing the transient term. Appendix~\ref{app:dcbin} works this out on a simple example. 

The corresponding row in $\Phi_k$ is therefore removed from every local interval in which it appears. This
is possible only when the interval is wider than the minimum width that the  counting condition
\eqref{eq:count} requires: at the minimum width $\ell = 2R+1$ the interval
carries exactly $4R+3$ lines, and removing one would leave fewer equations than
parameters. On the other hand, with $\ell = 4R+2$, twice the minimum, it carries $8R+5$ lines
against $4R+3$ parameters, so the loss of one line is immaterial. Notably, the
equivalent impedance is still estimated at $\omega_g$, because the local model
is fitted over an interval on which the neighboring lines are excited.

\subsection{Special cases}

Three special cases are of interest. If the excitation signal is repeated
periodically and $D_N$ contains an integer number of steady-state periods, then
$\bT(j\omega) = 0$ and the columns of $\Phi_k$ corresponding to $\bC_k$ can be
safely removed; local parametric modeling is \emph{still needed}, because a single
measurement still provides fewer data equations than unknowns. If it is known a
priori that the equivalent impedance is $dq$-symmetric, the model reduces to
$\bV\!_k = \bG(j\omega_k)\bI_k + \bT(j\omega_k)$ and the columns corresponding
to $\bB_k^-$ can be safely removed; the effect of leakage errors remains, so
local parametric modeling is still needed to eliminate the spectral leakage. If
both hold, an estimate of $\bG$ at $\omega_k$ can be obtained by a single
division $\widehat{\bG(j\omega_k)} = {\bV\!_k}/{\bI_k}$. However, the use of
this simple estimator (a.k.a.\ empirical transfer function estimate, ETFE
\cite[(6.24)]{ljung1998system}) is justified only if steady-state measurements
are possible; otherwise, significant errors would occur
\cite[Lem.~6.1]{ljung1998system}.

\subsection{Extracting the four real TFs}\label{sec:extract}

As pointed out earlier, non-parametric estimates of complex TFs can be used
directly for control and stability analysis. However, if needed, they can be
mapped numerically to non-parametric estimates of $Z_g$. For $\bG(s)$ in
\eqref{eq:single_complex_tf}, writing
$\bG^\ast(s) := [\bG(s^\ast)]^\ast$ for the conjugate reflection, we find that
$G_d(s) = 0.5({\bG(s) + \bG^\ast(s)})$ and
$G_q(s)\! =\! -0.5j({\bG(s) \!-\! \bG^\ast(s)})$. Hence, in the symmetric case
\begingroup\makeatletter\def\f@size{9}\check@mathfonts
\[
\begin{aligned}
 \widehat{Z_{dd}(j\omega_k)}  &= \frac{1}{2} (\widehat{\bG(j\omega_k)} + [\widehat{\bG(j\bar{\omega}_k)}]^\ast),\\
 \widehat{Z_{qd}(j\omega_k)} &= \frac{1}{2j} (\widehat{\bG(j\omega_k)} - [\widehat{\bG(j\bar{\omega}_k)}]^\ast),
\end{aligned}
\]
\endgroup
where $\widehat{\bG(j\bar{\omega}_k)}$ is the estimate of $\bG(-j\omega_{k})$,
$\bar{\omega}_k = \omega_{(N-k)_N}$ and $k \in\{0,\dots, \frac{N}{2}-1\}$
assuming an even $N$. Using a similar reasoning for the asymmetric case,
inverting \eqref{eq:complex_tfs} gives
\begingroup\makeatletter\def\f@size{8}\check@mathfonts
\begin{equation}\label{eq:reconstruction}
\begin{aligned}
\widehat{Z_{dd}(j\omega_k)} &= \frac{1}{2}   \Big( \widehat{\bG_+(j\omega_k)} + [\widehat{\bG_+(j\bar{\omega}_k)}]^\ast + \widehat{\bG_-(j\omega_k)} + [\widehat{\bG_-(j\bar{\omega}_k)}]^\ast \Big),\\
\widehat{Z_{qq}(j\omega_k)} &=  \frac{1}{2}  \Big( \widehat{\bG_+(j\omega_k)} + [\widehat{\bG_+(j\bar{\omega}_k)}]^\ast - \widehat{\bG_-(j\omega_k)} - [\widehat{\bG_-(j\bar{\omega}_k)}]^\ast \Big),       \\
\widehat{Z_{dq}(j\omega_k)} &= \!\frac{-1}{2j} \Big( \widehat{\bG_+(j\omega_k)}\! - [\widehat{\bG_+(j\bar{\omega}_k)}]^\ast\! - \!\widehat{\bG_-(j\omega_k)} + [\widehat{\bG_-(j\bar{\omega}_k)}]^\ast \Big),\\
\widehat{Z_{qd}(j\omega_k)} &= \frac{1}{2j}  \Big( \widehat{\bG_+(j\omega_k)}\! - [\widehat{\bG_+(j\bar{\omega}_k)}]^\ast \!+ \!\widehat{\bG_-(j\omega_k)} - [\widehat{\bG_-(j\bar{\omega}_k)}]^\ast \Big).
\end{aligned}
\end{equation}
\endgroup

The mapping from the complex pair to $Z_g$ alters how estimation errors propagate. 
A zero
of $\bG_+$ is in general not a zero of $Z_{dd}$, because
\eqref{eq:reconstruction} fills the antiresonance with the contribution of the
mirrored line. An error committed at a notch of $\bG_+$ is therefore small in
absolute terms, and it enters an element of $Z_g$ whose magnitude is not small,
so it does not show in an absolute error criterion on $Z_g$. If instead the
complex TFs are used directly, as in SISO sequence-domain or passivity analysis,
the notch has to be resolved on its own, which needs a number of spectral lines
across it.

%% file: sections/measurement.tex
\section{Measurement Chain and Frame Alignment}\label{sec:measurement}

Section~\ref{sec:analysis} treats the $dq$ signals as if they were available
directly. On a real acquisition path the phase quantities first pass through
transducers and through an anti-alias/decimation filter, and only then through
the Park transform. The location of the dynamic filters relative to the Park transform matters,
because a filter with real-valued coefficients in the stationary frame does not act as such in
the $dq$ frame. This section derives the effect and its exact correction, and
then fixes the frame in which the estimate is expressed.

\subsection{A stationary-frame filter seen from the \texorpdfstring{$dq$}{dq} frame}

 For any \emph{real-coefficient} filter $H(s)$ acting in the $abc$
or stationary frame, define its two $dq$ sidebands \cite{Harnefors2007}
\[
H_{+}(\omega) := H(j(\omega_g + \omega)), \quad
H_{-}(\omega) := H(j(\omega_g - \omega)).
\]
Since the coefficients are real, $H(-j\omega) = [{H(j\omega)}]^\ast$. 
\medskip

\subsubsection*{Frame mapping} Let a real filter $H(s)$, with a real impulse response $h(t)$, act on the stationary-frame ($\alpha\beta$) signals and
let the $dq$ signals be obtained by the Park demodulation
$x_{dq}(t) = x_{\alpha\beta}(t)\eu^{-j\omega_g t}$. Then in the $dq$ frame the
filter has the complex impulse response
$h_{dq}(\sigma) = h(\sigma)\eu^{-j\omega_g \sigma}$, equivalently the frequency
response $H_{dq}(j\omega) = H(j(\omega_g+\omega)) = H_{+}(\omega)$.

This can be observed by noticing that with $y_{\alpha\beta} = h \conv x_{\alpha\beta}$ and
$x_{dq}(t) = \eu^{-j\omega_g t}x_{\alpha\beta}(t)$, splitting the exponential as
$\eu^{-j\omega_g t} = \eu^{-j\omega_g\sigma}\eu^{-j\omega_g(t-\sigma)}$ inside the 
convolution gives $y_{dq}(t) = \int [h(\sigma)\eu^{-j\omega_g\sigma}]\,x_{dq}(t-\sigma)\diff\sigma$,
whose transform is $H(j(\omega_g+\omega))$ (see \cite{Harnefors2007} and \cite{Martin2004}).
\smallskip

Next, we analyse what happens when pre-Park transform filters, transducer and anti-aliasing dynamics, are present in the measurement acquisition. 

\subsection{Distortion due to pre-Park filtering}

\begin{proposition}\label{prop:image}
Let the voltage and current acquisition chains have real-coefficient responses
$C^v = S^v H$ and $C^i = S^i H$, a transducer in series with the anti-alias
decimator, applied to the stationary-frame signals, that is, before the Park
transform. At the spectral line $k$ write
$C^{v,+}_{k} := C^v(j(\omega_g+\omega_k))$,
$C^{i,+}_{k} := C^i(j(\omega_g+\omega_k))$ and
$C^{i,-}_{k} := C^i(j(\omega_g-\omega_k))$. Then the estimator applied to the
filtered records returns, at every line $k$,
\begingroup\makeatletter\def\f@size{9.5}\check@mathfonts
\[
\begin{gathered}
\widehat{\Gp(j\omega_k)} = \frac{C^{v,+}_{k}}{C^{i,+}_{k}}\,\Gp(j\omega_k), \quad
\widehat{\Gm(j\omega_k)} = \frac{C^{v,+}_{k}}{[{C^{i,-}_{k}}]^\ast}\,\Gm(j\omega_k).\\[0.2em]
\end{gathered}
\]
\endgroup
\end{proposition}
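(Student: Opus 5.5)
The plan is to express the filtered $dq$ records in terms of the true ones, substitute into the model \eqref{eq:complex_model}, and read off a relation of exactly the form of Theorem~\ref{thm:dft}. The estimator then identifies whatever pair multiplies $\bI^{\mathrm{m}}_k$ and its mirror.

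\emph{Step 1: filtered records in the $dq$ frame.} By the frame mapping above, a real-coefficient filter acting before the Park transform acts in the $dq$ frame as the complex SISO operator with frequency response $C(j(\omega_g+\omega))$. Hence the measured signals are $\bv^{\mathrm{m}} = C^v_{dq}(\p)\,\bv$ and $\bi^{\mathrm{m}} = C^i_{dq}(\p)\,\bi$, with $C^v_{dq}(j\omega)=C^v(j(\omega_g+\omega))$ and similarly for $C^i_{dq}$.

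\emph{Step 2: conjugate of a filtered signal.} I need the operator relating $\bi^{\mathrm{m}\ast}$ to $\bi^\ast$. Its impulse response is $[h(\sigma)\eu^{-j\omega_g\sigma}]^\ast = h(\sigma)\eu^{j\omega_g\sigma}$, since $h$ is real. So $\bi^{\mathrm{m}\ast}=\widetilde C^i(\p)\,\bi^\ast$ with $\widetilde C^i(j\omega)=C^i(j(\omega-\omega_g))=[C^i(j(\omega_g-\omega))]^\ast$, using $C(-j\omega)=[C(j\omega)]^\ast$. At $\omega_k$ this equals $[C^{i,-}_k]^\ast$.

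\emph{Step 3: model for the measured signals.} All operators are now SISO, and I treat them as commuting (LTI) complex operators. I apply $C^v_{dq}$ to \eqref{eq:complex_model}. I substitute $\bi=(C^i_{dq})^{-1}\bi^{\mathrm{m}}$ and $\bi^\ast=(\widetilde C^i)^{-1}\bi^{\mathrm{m}\ast}$, assuming the acquisition responses are nonzero on the band. This gives
\[
\bv^{\mathrm{m}} = \frac{C^v_{dq}}{C^i_{dq}}\bG_+\,\bi^{\mathrm{m}} + \frac{C^v_{dq}}{\widetilde C^i}\bG_-\,\bi^{\mathrm{m}\ast}.
\]
Equivalently, to avoid inverses, I write $\bv^{\mathrm{m}} = \bG_+ C^v_{dq}\bi + \bG_- C^v_{dq}\bi^\ast$ and note that $\bi^{\mathrm{m}}$, $\bi^{\mathrm{m}\ast}$ are the corresponding filtered inputs. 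This is again of the form \eqref{eq:complex_model}, with the pair $\big(\tfrac{C^v_{dq}}{C^i_{dq}}\bG_+,\ \tfrac{C^v_{dq}}{\widetilde C^i}\bG_-\big)$.

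\emph{Step 4: apply Theorem~\ref{thm:dft}.} Applying the theorem to the measured record gives, at each line $k$, relation \eqref{eq:dft_relation} with this pair evaluated at $j\omega_k$, plus a transient term. The transient term now also contains the unforced responses of the filter operators. Any such transient is smooth in $k$ and is absorbed by $\bC_k$ in the local model. Evaluating the pair at $j\omega_k$ yields $C^{v,+}_k/C^{i,+}_k$ and $C^{v,+}_k/[C^{i,-}_k]^\ast$. Since the estimator \eqref{eq:estimators} targets whatever pair enters \eqref{eq:dft_relation}, it returns the stated expressions.

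\emph{Main obstacle.} The delicate step is Step 3, specifically the rigorous justification for the finite record. There are two issues. First, the inverse filters may be noncausal or unstable if $C^i$ has zeros. Second, the transient term must remain of the form covered by Theorem~\ref{thm:dft}. I would first avoid inversion altogether by working directly in the DFT domain. There, Theorem~\ref{thm:dft} applied to each filter gives $\bV^{\mathrm m}_k = C^{v,+}_k\bV_k+\bT_v$, $\bI^{\mathrm m}_k = C^{i,+}_k\bI_k+\bT_i$, and $\bI^{\mathrm m\ast}_{(N-k)_N}=[C^{i,-}_k]^\ast\bI^\ast_{(N-k)_N}+\bT_i'$, where the last follows from Appendix~\ref{app:conj}. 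I would then solve for $\bI_k$ and $\bI^\ast_{(N-k)_N}$, substitute into \eqref{eq:dft_relation}, and collect all leftover terms into a single transient term.
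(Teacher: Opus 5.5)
Your proposal is correct, and the DFT-domain route in your final paragraph is exactly the paper's proof. The paper writes the filtered spectra at line $k$ and at the mirrored line as $C^{v,+}_k\bV_k$, $C^{i,+}_k\bI_k$ and $[C^{i,-}_k]^\ast\bI^\ast_{(N-k)_N}$ plus leakage terms, substitutes into \eqref{eq:dft_relation}, lumps the leftovers into a new transient term, and appeals to smoothness of the factors over each local interval; your operator-level Steps 1--3 serve only as motivation for that argument.

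One slip to delete: the ``equivalently, to avoid inverses'' sentence in Step 3 is not equivalent. Even when $C^v = C^i$, $C^v_{dq}\bi^\ast \neq \bi^{\mathrm{m}\ast}$, because the two responses are $C^v(j(\omega_g+\omega))$ and $[C^i(j(\omega_g-\omega))]^\ast$, and that mismatch is precisely the factor $C^{v,+}_k/[C^{i,-}_k]^\ast$ distorting $\Gm$.
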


The proof is given in Appendix~\ref{app:distortion}. 

When the anti-alias/decimation filters are common to the two measurement chains, $\widehat{\Gp} = \Gp$ if and only if
$S^v = S^i$.  Any \emph{transducer mismatch} introduces a proportional bias in $\widehat{\Gp}$ equal to $S^{v,+}/S^{i,+}$. The bias can be corrected if
the transducers are relatively calibrated. Notably, even with matched chains,
$\widehat{\Gm} = D\,\Gm$ with $D = {C^{v,+}_{k}}/{[{C^{i,-}_{k}}]^\ast}$.
For a transducer with a flat response across both sidebands, this factor reduces to the response ratio of $H$ alone.  In what follows we show how to mitigate this distortion.

\subsection{The moving-average decimator}\label{sec:boxcar}

For the moving-average decimator,  the pulse response is $h[n] = 1/N_f$, with $n = 0,\dots,N_f-1$, operating at a rate
$1/T_f$, where $T_f$ is the fast sampling period. Its frequency response,
$H(j2\pi f) = A(f)\eu^{-j2\pi f\tau}$ is  linear-phase. It has a real amplitude response
$A(f) = \sin(\pi f N_f T_f)/(N_f\sin(\pi f T_f))$ and group delay
$\tau = (N_f-1)T_f/2$. The distortion factor according to Proposition \ref{prop:image} is then
\begin{equation}\label{eq:Dfactor}
\begin{aligned}
D(\omega) &= \frac{H(j(\omega_g+\omega))}{[{H(j(\omega_g-\omega))}]^\ast}
= \frac{A(f_g+f)}{A(f_g-f)}\,\eu^{-j2\omega_g\tau}\\
&=: \rho(\omega)\,\eu^{-j2\omega_g\tau}.
\end{aligned}
\end{equation}
The frequency-dependent parts of the two linear phases cancel, leaving a
\emph{constant} phase $-2\omega_g\tau$ at every bin together with a real
magnitude ratio $\rho$, which is the ratio of the passband droop between the two
sidebands. Over an identification band that is narrow compared with the decimation rate, $\rho$ stays close to unity and the associated magnitude factor can be neglected, so the distortion of $\Gm$ is dominated by the constant phase. Whether the distortion can be ignored depends on the parameters of the system and the acquisition channel.

Since $D$ is known at every bin, the corrected estimate
\begin{equation}\label{eq:correction}
\widehat{\Gm}^{\,\mathrm{corr}}(j \omega_k ) = \widehat{\Gm}(j \omega_k)/D(\omega_k),
\end{equation}
for $\Gm$, while $\Gp$ needs no correction (assuming $S^i=S^v$). Notice that the same distortion is less convenient in the real $2\times2$ TF
representation. Splitting $Z = Z_s(\Gp) + Z_a(\Gm)$ into its rotationally invariant
and non-rotationally invariant parts, the pre-Park filter leaves $Z_s$ unchanged and maps $Z_a \mapsto \rho\,R(-2\omega_g\tau)\,Z_a$  where $R$ is a rotation matrix. The combinations
$Z_{dd}+Z_{qq}$ and $Z_{qd}-Z_{dq}$ are therefore clean, but all four entries
individually mix the exact $\Gp$ part with the distorted $\Gm$ part, so an
entry-wise comparison against the true $Z_g$ shows a discrepancy wherever
$|\Gm|$ is non-negligible; it is easily correctable by first decomposing into
$(\Gp,\Gm)$. The conjugate-coupling structure localizes the measurement-chain distortion to a single correctable
channel; this makes the complex-TFs parameterization natural here.

\begin{remark}
 A moving-average filter is common but generally not the preferred choice for anti-aliasing;  a much flatter passband and a sharper transition can be designed
for the same decimation factor. Yet, the moving-average is used here because it is the filter
available on the real-time simulator of Section~\ref{sec:hil}. The above analysis
does not depend on this choice: \eqref{eq:Dfactor} holds for any real-coefficients FIR filter, and
only the numerical values of $\tau$ and $\rho$ change.
\end{remark}
\vspace{-0.5cm}

\subsection{Frame alignment}\label{sec:alignment}
The single angle anticipated in Section~\ref{sec:frames} is obtained as follows.
The $dq$ transform applied to the measured voltage and current records is built
on a synthetic angle $\theta = \int\omega\,\diff t$ with $\omega$ fixed at the
nominal grid frequency. The transformation is thus
not affected by any synchronization dynamics. However, the defined frame has an arbitrary phase reference, and the identified impedance is
rotated relative to the frame aligned with the PCC voltage. The offset is the
steady-state phase difference $\theta_1$ between the synthetic angle and the
converter's PLL angle, and it is obtained by reading the settled angle of the PLL
just before excitation; a type-2 synchronous-reference-frame (SRF) PLL
\cite{chung2000phase} has zero steady-state phase error and therefore reports the
true system-frame angle. The estimate is then aligned by a single static
rotation, under which $\Gp$ is invariant while
$\Gm \mapsto \eu^{j2\theta_1}\Gm$: the rotation changes the phase of $\Gm$ but
not its magnitude. The PLL thus serves only as a one-time angle reader at steady
state, and plays no role during the measurement or the identification.

Notice that the filter phase $-2\omega_g\tau$ of Section~\ref{sec:boxcar} and the
alignment angle $\theta_1$ are both static $dq$-frame rotations acting on $\Gm$,
but they correct different effects, one due to the measurement chain and one to
the choice of frame reference.

%% file: sections/hil.tex
\section{Experimental Validation}\label{sec:hil}

This section presents a validation of the proposed framework using a hardware-in-the-loop system. The identified responses are validated against the analytically derived
small-signal model of the test system, which serves as the reference against which the estimates are assessed.

\subsection{HIL experiment platform} 
The accuracy of the proposed method is experimentally validated using the setup shown in Fig.~\ref{fig:rig}. The main circuit is executed on the ModelingTech MT8020 real-time simulator with a real-time step of $1\,\si{\micro\second}$. The Imperix B-Board 3.0 controller acquires analog signals from the MT8020 and generates PWM gating signals for the converter switches. Both the controller sampling frequency and the converter switching frequency are set to  $10\,\si{\kilo\hertz}$. The output waveforms are recorded using an oscilloscope. 

The MT8020 simulator implements a moving-average FIR filter decimator on its FPGA, before passing the samples to the controller.  Here it decimates from $10^6$ to $10^4$
samples per second. This is the decimator of Section~\ref{sec:boxcar}
with $N_f = 100$, hence $\tau = (N_f-1)T_f/2 = \SI{49.5}{\micro\second}$, a
constant $\Gm$ phase of $-2\omega_g\tau = -1.78^\circ$ at $f_g = \SI{50}{\Hz}$,
and a sideband magnitude ratio $\rho$ within $\pm\SI{0.03}{\dB}$ over the
evaluation band. Because it acts on the stationary-frame signals, ahead of the
Park transform, correction \eqref{eq:correction} applies to 
$\Gm$ on the full signed grid; $\Gp$ requires none, the shared decimator
cancelling exactly by Proposition~\ref{prop:image} as both acquisition chains are identical ($S^v = S^i = 1$). Two static angles therefore rotate $\Gm$: the filter phase $-2\omega_g\tau$, and
the frame-alignment angle $\theta_1$ of Section~\ref{sec:alignment}, read from the
converter PLL at steady state immediately before excitation. Since the grid
frequency is constant and equal to the integrator frequency throughout, this
single reading remains valid.

\begin{figure}[t]
    \centering
    \includegraphics[width=0.85\columnwidth]{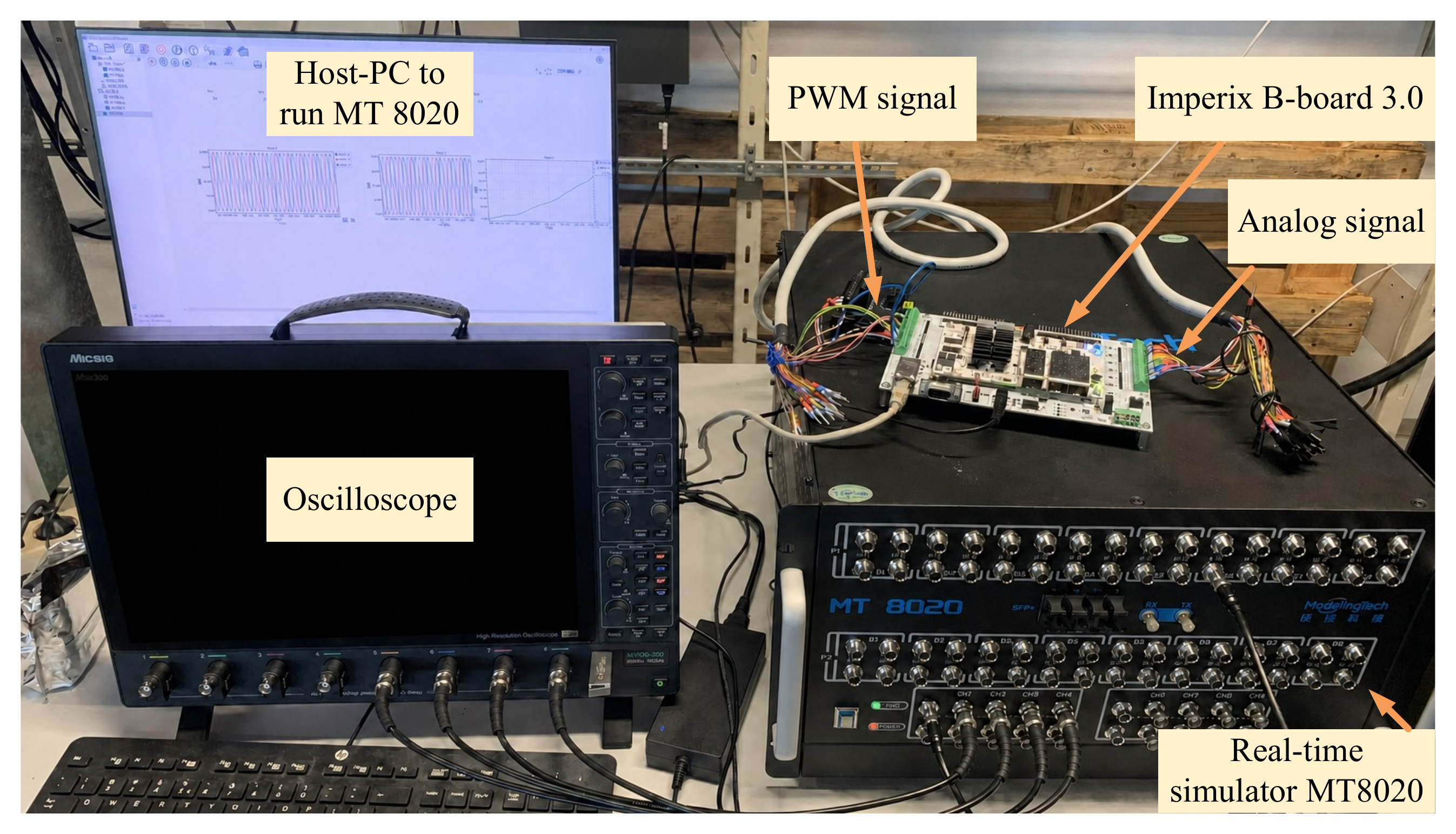}\vspace{-0.15cm}
    \caption{The controller hardware-in-the-loop rig.
    }\vspace{-0.15cm}
    \label{fig:rig}
\end{figure}

\begin{figure*}[t]
    \centering
    \subfloat[Excitation signal at $t_1$.]{%
      \includegraphics[width=0.31\textwidth]{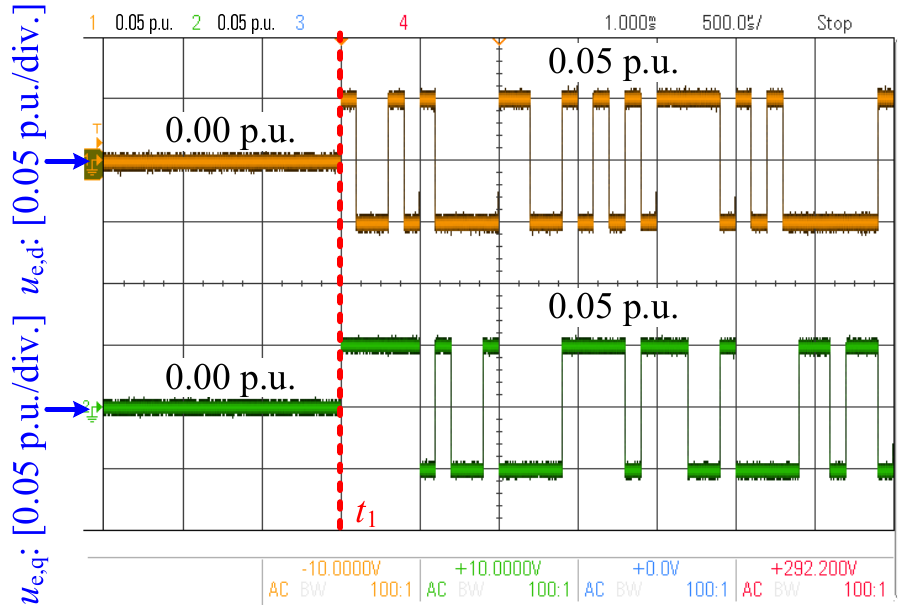}}
    \hfil
    \subfloat[Phase-$a$ voltage.]{%
      \includegraphics[width=0.31\textwidth]{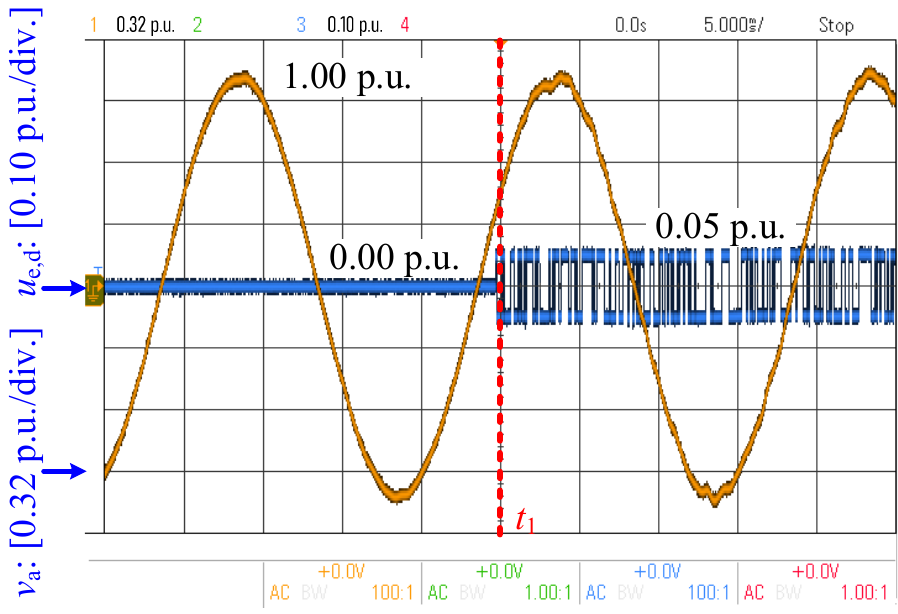}}
    \hfil
    \subfloat[Phase-$a$ current.]{%
      \includegraphics[width=0.31\textwidth]{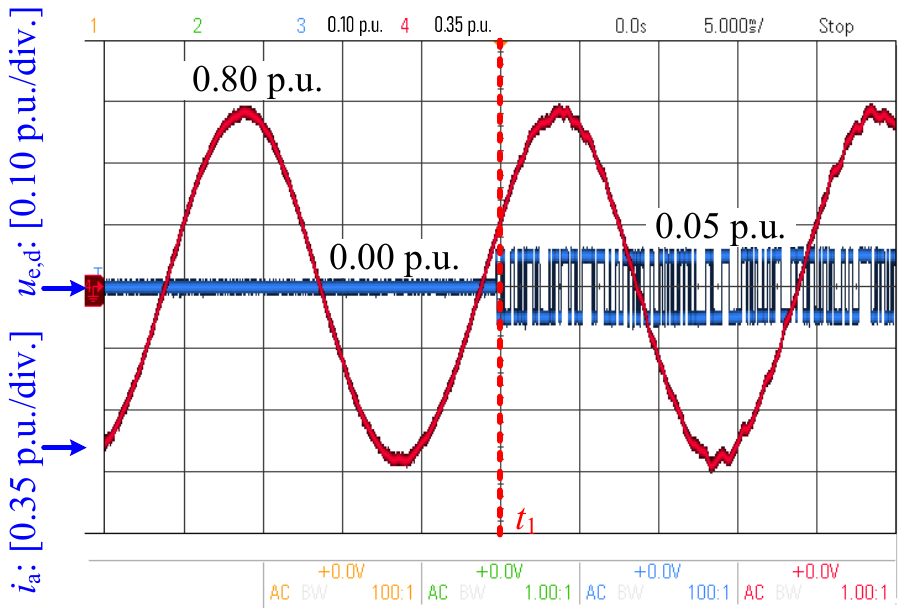}}
    \caption{Oscilloscope traces recorded on the rig. The excitation is a
    zero-mean random binary signal of amplitude $\leq0.05$~p.u.\ added to the
    converter voltage references; $t_1$ marks the instant at which it is
    applied, which is also the instant at which recording begins.}
    \label{fig:scope}
\end{figure*}

The RBS excitation is generated by the controller  and
superimposed on the converter voltage references. Fig.~\ref{fig:scope} shows the
traces recorded on the rig: Fig.~\ref{fig:scope}(a) shows the two excitation
components, $u_{\mathrm{e},\mathrm{d}},u_{\mathrm{e},\mathrm{q}}$,  which are applied at $t_1$ with an amplitude of $0.05$~p.u. Fig.~\ref{fig:scope}(b) and (c)  show the phase-$a$ voltage $v_{\mathrm{a}}$ and
current  $i_{\mathrm{a}}$, respectively, where the superimposed perturbation becomes visible against the fundamental waveform from $t_1$ onward. The excitation amplitude is selected to avoid excessive harmonic distortion ($\text{THD}_v \approx 2.3\%$ based on 10 cycles, all intraharmonics, up to the $50^{\text{th}}$ harmonic). Recording starts with the excitation and runs
for \SI{1}{\second}, giving $N = 10^4$ samples and a DFT resolution of
\SI{1}{\Hz}. One such record is used per case.

The estimator is applied with a local model order $R = 4$ and a local frequency
interval of radius $\ell = 4R+2 = 18$ spectral lines, so that each local problem
uses $2\ell+1 = 37$ lines, to
determine $4R+3 = 19$ complex parameters. The same settings are used for both
cases. Since $37$ lines are available for $19$ parameters, the line at
$\omega = 0$ can be removed, as described in Section~\ref{sec:excitation}, from
every interval that contains it while still leaving enough lines for
identifiability and excitation.

\subsection{Test system configuration}

\begin{figure}[t]
    \centering
    \input{figures/grid_oneline}
    \caption{One-line diagram of the three-phase HIL test system. With the
    breaker open the grid reduces to the passive network of Case~1; with the
    breaker closed it also contains the branch to PCC2 and the grid-following
    VSC of Case~2. }
    \label{fig:grid}
\end{figure}

Two cases are considered on the same network.\smallskip

\emph{Case~1} is the passive three-phase grid of Fig.~\ref{fig:grid} with the breaker open. It consists of a resistive load at PCC1, a $\pi$-model transmission line, and an additional $RL$ line connected to an infinite bus. This configuration provides a representative passive-grid benchmark for impedance identification: the series line impedances capture the resistive-inductive characteristics of transmission paths, whereas the shunt capacitances introduce frequency-dependent dynamics and possible resonant behavior beyond that of a simple $RL$ grid equivalent. Its equivalent impedance is $dq$-symmetric.

\emph{Case~2} is qualitatively different from Case~1. With the breaker closed, the passive grid is interconnected with a grid-following voltage-source converter (GFL-VSC) at a second point of common coupling, PCC2, through an $LC$ filter, an additional line, and a shunt load. The breaker connects this branch to the main line at the node between the second shunt capacitor $C_2$ and the line to the infinite bus. Unlike Case~1, this system therefore contains an actively controlled power-electronic device, introducing additional control dynamics. The GFL-VSC is supplied by an ideal DC link and employs an inner current loop, outer active- and reactive-power loops, an SRF-PLL \cite{chung2000phase}, and feedforward decoupling \cite{teodorescu2011grid}; these control dynamics render the equivalent impedance $dq$-asymmetric. More importantly, this case represents a practical scenario in which a transmission system operator identifies the dynamic terminal model of a converter-based resource from grid interface measurements. Table~\ref{tab:params} lists the parameters of both cases.

\input{figures/fig_ratio}

Fig.~\ref{fig:ratio} quantifies how much the coupling matters in Case 2, as the pointwise
ratio $r(\omega) = |\Gm(j\omega)|/|\Gp(j\omega)|$: about $\SI{-10}{\dB}$ at the
lines adjacent to zero, of the order of $\SI{-28}{\dB}$ within
$\pm\SI{20}{\Hz}$, between $\SI{-37}{\dB}$ and $\SI{-42}{\dB}$ over $20$ to
$\SI{200}{\Hz}$, and near $\SI{-54}{\dB}$ beyond $\pm\SI{200}{\Hz}$. Neglecting
cross-coupling is therefore defensible at high frequencies away from the
fundamental, but not close to it. This is where subsynchronous and PLL-driven
interactions occur, and where the sharpest feature of $\Gp$ is also located, so
that resolution matters most. 

\input{sections/table_params}

\subsection{Case 1: a \texorpdfstring{$dq$}{dq}-symmetric grid}

\input{figures/fig_case1_Gplus}
\input{figures/fig_case1_Gminus}
\input{figures/fig_case1_Z}

Figs.~\ref{fig:c1Gp} and \ref{fig:c1Gm} show the identified responses of the
complex TFs over $[-1,1]\,\si{\kHz}$ and Fig.~\ref{fig:c1Z} the responses of the
four real TFs over $[1,1200]\,\si{\Hz}$, against the analytical model.  This frequency range remains well below the Nyquist frequency. The estimate of $\Gp$
overlays the model across the band, including the antiresonance at
$\SI{-50}{\Hz}$, where $|\Gp|$ falls to $\SI{-20.8}{\dB}$. 
 That notch is the image of the $abc$ direct-current point:
the $dq$ frequency axis is the $abc$ axis shifted by $-\omega_g$, so
$\omega_p = -\omega_g$ maps to zero frequency in phase coordinates, where the
passive network reduces to its resistive divider
$R_1\!\parallel\!(R_2+R_3) = 0.0907$~p.u. Its \SI{3}{\dB} width is
\SI{17.2}{\Hz}, so about seventeen spectral lines fall across it at \SI{1}{\Hz}
resolution, against a local interval of \SI{37}{\Hz}. This reduces the interpolation errors significantly.  

The estimator does not make use of the knowledge that the grid is
$dq$-symmetric in this case, and identifies both TFs throughout. Its estimate of $\Gm$,
whose true value is identically zero, sits at the identification noise floor,
with a median of \SI{-77}{\dB} over the evaluation band against a peak $|\Gp|$
of \SI{6.0}{\dB}, a separation of some \SI{80}{\dB}. The absence of coupling
is thus obtained from the data and not assumed.

\subsection{Case 2: a \texorpdfstring{$dq$}{dq}-asymmetric grid}

\input{figures/fig_case2_Gplus}
\input{figures/fig_case2_Gminus}
\input{figures/fig_case2_Z}
\input{figures/fig_zoom}

With the breaker closed, the added converter changes $\Gp$ and makes the
equivalent impedance asymmetric, so $\Gm$ ceases to be zero. Figs.~\ref{fig:c2Gp} and \ref{fig:c2Gm} show the responses of both complex TFs
against the model and Fig.~\ref{fig:c2Z} those of the four real TFs. 
The direct TF $\Gp$ retains the $\SI{-50}{\Hz}$ notch, now \SI{26.1}{\dB} deep with a \SI{3}{\dB} width of \SI{17.6}{\Hz},  and gets a new peak around $\SI{0}{\Hz}$.

The coupling TF $\Gm$ has a distinctive shape. It peaks at \SI{-16.9}{\dB} within a
hertz of the fundamental and drops into minima of \SI{-55.4}{\dB} and
\SI{-54.5}{\dB} at $\mp\SI{50}{\Hz}$. It then recovers to local maxima near \SI{-43}{\dB} at about $\mp\SI{160}{\Hz}$, and decays to roughly \SI{-56}{\dB} at $\pm\SI{1}{\kHz}$.
The estimate tracks the model over this whole range. Below roughly \SI{-50}{\dB} a scatter of a few decibels appears and grows towards the band edges, where the coupling channel becomes limited by SNR.

Fig.~\ref{fig:zoom} examines the three sharpest features
directly: the $\SI{-50}{\Hz}$ notch of $\Gp$, the near-fundamental peak of
$\Gm$, which rises by nearly \SI{20}{\dB} within $\pm\SI{8}{\Hz}$ of zero, and
the $\SI{-50}{\Hz}$ minimum of $\Gm$. All three are resolved at \SI{1}{\Hz},
although only a handful of lines fall across the peak of $\Gm$, which
is the situation anticipated in Section~\ref{sec:extract}.

\subsection{Accuracy}

\input{sections/table_metrics}

To evaluate the performance of the proposed method, we use the following two metrics. For any real
response, with
$Z_{0:k} := \begin{bmatrix} Z(j\omega_0) & \cdots & Z(j\omega_k)\end{bmatrix}^\top$
the true values and $\widehat{Z}_{0:k}$ their estimate,
\begin{equation}\label{eq:fit}
\mathrm{Fit\%} := \left(1 - \frac{\|\widehat{Z}_{0:k} - Z_{0:k}\|_2^2}{\|Z_{0:k} - \mathrm{mean}(Z_{0:k})\|_2^2}\right)\times 100,
\end{equation}
with $\mathrm{mean}(Z_{0:k}) = \frac{1}{k+1}\sum_{n=0}^k Z(j\omega_n)$. Fit\%
may be negative; larger is better and a perfect estimate scores $100$. The
second metric is the worst-case relative error
\begin{equation}\label{eq:hinf}
\text{relative } H_\infty \text{ error} := \frac{\max_{n} \sigmax\big(\widehat{Z_g}(j\omega_n) - Z_g(j\omega_n)\big)}{\max_{n} \sigmax\big(Z_g(j\omega_n)\big)},
\end{equation}
with $\sigmax$ the largest singular value. The two metrics are complementary.
Fit\% is an aggregate measure: it compares the energy of the estimation error
over the whole band with the energy of the variation of the true response about
its mean, so it reports how well the overall shape of the frequency response is
captured. The relative $H_\infty$ error is a worst-case measure: it reports the
largest discrepancy at any single frequency, normalized by the largest gain of
the true response, and is therefore sensitive to localized errors, for instance
close to a resonance, which the aggregate measure would average out. Both extend
directly to the complex responses on the two-sided grid: \eqref{eq:fit} is unchanged, and
for a scalar complex response, $\sigmax$ reduces to the modulus, so
\eqref{eq:hinf} becomes $\max_k|\widehat{\bG_k} - \bG_k|/\max_k|\bG_k|$ over the band. The
complex responses are evaluated over $[-1,1]\,\si{\kHz}$ and the real ones over
$[1,1200]\,\si{\Hz}$, the same bands as the corresponding figures.

Table~\ref{tab:metrics} collects the results. In Case~1 all four real responses
are fitted to within rounding of $100\%$ and the worst-case relative error is
$0.4\%$. In Case~2 the fits remain above $99.7\%$ and the worst-case relative
error is $6.5\%$, with the largest error at \SI{1}{\Hz}, i.e.\ at the sharp near-fundamental
feature. The complex responses tell the same
story: $\Gp$ is recovered essentially exactly in both cases, while the $85.57\%$
fit of $\Gm$ in Case~2 is dominated by that same narrow peak, whose amplitude
changes by more than a decibel between adjacent lines.

\subsection{Local model order}\label{sec:order}

All results reported above use a local model order $R = 4$ with
$\ell = 4R+2 = 18$. Table~\ref{tab:order} repeats the accuracy metrics for
$R = 4, 6, 8$ and $10$, with $\ell$ adjusted accordingly. The estimates are fairly
insensitive to this choice: over the whole range the Fit\% of every real TF
varies by less than $0.03$ in Case~1 and by less than $0.04$ in Case~2, and the
relative $H_\infty$ error changes by less than $0.01$. This is the behaviour
expected of local rational modeling, and it contrasts with global parametric
methods, whose accuracy depends strongly on the selected order. 

Table~\ref{tab:ordercomplex} reports the same two metrics for the complex TFs.
The estimate of $\bG_+$ is essentially exact at every order in both cases, while
the figures for $\bG_-$ in Case~2 are visibly poorer and show slight improvement with larger $R$.  This is due to the small magnitude of $\Gm$, which is three orders of magnitude
smaller than $\Gp$ around \SI{0}{\Hz}, and negligible at higher frequencies. 

\input{sections/table_order}
\input{sections/table_order_complex}

%% file: figures/grid_oneline.tex
\resizebox{\columnwidth}{!}{%
\begin{tikzpicture}[circuit ee IEC, scale=0.62, every node/.style={scale=0.72}]

\draw[fill=greenBackgroundcolor!30,color=greenBackgroundcolor!30,opacity=0.8]
  (1.15,0) -- (1.15,1.60) -- (2.25,1.60) -- (2.25,1.85) -- (2.65,1.45)
  -- (2.25,1.05) -- (2.25,1.30) -- (1.45,1.30) -- (1.45,0) -- (1.15,0);
\node[color=greenBackgroundcolor] at (1.85,2.10) {$Z_\mathrm{g}(s)$};

\draw[densely dashed, rounded corners=5, color=greenBackgroundcolor, thick]
  (1.45,1.30) rectangle (14.2,-5.55);
\node[color=greenBackgroundcolor] at (11.9,1.65) {three-phase grid};

\draw (0.45,0.30) -- (0.45,-0.30);
\node at (0.45,0.68) {PCC1};
\draw (0.45,0) -- (1.45,0);

\draw (1.45,0) -- (3.40,0);
\fill (1.95,0) circle (0.055);
\draw (1.95,0) \resV;
\gnd{1.95}{-1.45}
\node at (2.55,-0.72) {$R_1$};
\fill (3.20,0) circle (0.055);
\shuntcap{3.20}{0}
\node at (3.75,-0.55) {$C_1$};

\draw (3.40,0) -- (3.85,0);
\draw (3.85,0) \resH;
\draw (5.30,0) -- (5.55,0);
\draw (5.55,0) to [inductor={yscale=1.4}] (7.05,0);
\node at (4.58,0.58) {$R_2$};
\node at (6.30,0.65) {$L_2$};

\draw (7.05,0) -- (9.05,0);
\fill (8.10,0) circle (0.055);
\fill (7.35,0) circle (0.055);
\shuntcap{7.35}{0}
\node at (6.65,-0.55) {$C_2$};

\draw (9.05,0) \resH;
\draw (10.50,0) -- (10.75,0);
\draw (10.75,0) to [inductor={yscale=1.4}] (12.25,0);
\node at (9.78,0.58) {$R_3$};
\node at (11.50,0.65) {$L_3$};

\draw (12.25,0) -- (13.60,0) -- (13.60,-0.65);
\acsource{13.60}{-0.97}
\draw (13.60,-1.29) -- (13.60,-1.80);
\gnd{13.60}{-1.80}
\node at (12.65,-0.80) {infinite};
\node at (12.65,-1.22) {bus};

\draw (8.10,0) -- (8.10,-1.15);
\fill (8.10,-1.15) circle (0.06);
\fill (8.10,-2.15) circle (0.06);
\draw (8.10,-1.15) -- (8.72,-1.90);
\node at (9.35,-1.55) {breaker};
\draw (8.10,-2.15) -- (8.10,-2.60) -- (12.10,-2.60) -- (12.10,-3.70);

\draw[rounded corners=3, fill=black!7, draw=black!55] (2.45,-4.25) rectangle (4.20,-3.15);
\node at (3.32,-3.55) {\small GFL};
\node at (3.32,-3.95) {\small VSC};

\draw (4.20,-3.70) \resH;
\draw (5.65,-3.70) -- (5.85,-3.70);
\node at (4.92,-3.15) {$R_\mathrm{f}$};
\draw (5.85,-3.70) to [inductor={yscale=1.4}] (7.35,-3.70);
\node at (6.60,-3.08) {$L_\mathrm{f}$};
\fill (7.35,-3.70) circle (0.055);
\shuntcap{7.35}{-3.70}
\node at (6.85,-4.25) {$C_\mathrm{f}$};

\draw (7.90,-3.42) -- (7.90,-3.98);
\node at (8.42,-4.15) {PCC2};

\draw (7.35,-3.70) -- (8.55,-3.70);
\draw (8.55,-3.70) to [inductor={yscale=1.4}] (10.05,-3.70);
\node at (9.30,-3.08) {$L_5$};
\draw (10.05,-3.70) -- (10.30,-3.70);
\draw (10.30,-3.70) \resH;
\node at (11.02,-3.12) {$R_5$};
\draw (11.75,-3.70) -- (12.10,-3.70);

\fill (12.10,-3.70) circle (0.055);
\draw (12.10,-3.70) \resV;
\gnd{12.10}{-5.15}
\node at (12.72,-4.42) {$R_4$};

\end{tikzpicture}}

%% file: figures/fig_ratio.tex
\begin{figure}[t]
\centering
\begin{tikzpicture}
\begin{axis}[hilaxis, width=1.02\columnwidth, height=4.0cm,
             xmin=-1000, xmax=1000, xtick={-1000,-500,0,500,1000},
             ymin=-75, ymax=0, ytick={0,-20,-40,-60},
             xlabel={Frequency [\si{\Hz}]}, ylabel={$r(\omega)$ [\si{\dB}]}]
\addplot[truecurve] table[x=f,y=r_dB] {figures/data/case2_ratio_true.dat};
\end{axis}
\end{tikzpicture}
\caption{Case~2: the ratio $r(\omega)=|\bG_-(j\omega)|/|\bG_+(j\omega)|$, which
measures how strong the mirror-frequency coupling is relative to the direct
term. The coupling is significant only in a narrow band around the
fundamental.}
\label{fig:ratio}
\end{figure}

%% file: sections/table_params.tex
\begin{table}[t]
\renewcommand{\arraystretch}{1.08}
\caption{Parameters of the HIL test system. Values in per unit unless stated.}
\label{tab:params}
\centering
\footnotesize
\setlength{\tabcolsep}{4pt}
\begin{tabular}{@{}p{3.85cm}p{1.85cm}l@{}}
\toprule
Parameter & Symbol & Value \\
\midrule
\multicolumn{3}{@{}l}{\textit{Base quantities and platform}}\\
Voltage, power \& freq.\ base & $V_\mathrm{b},S_\mathrm{b},f_\mathrm{b}$ & \SI{380}{\volt}, 1\,MVA, \SI{50}{\Hz}\\
Control step / switching freq. & $T_s$ & \SI{e-4}{\second} / \SI{10}{\kHz}\\
FPGA step, decimation length & $T_f$, $N_f$ & 1\si{\micro\second}, 100\\
\midrule
\multicolumn{3}{@{}l}{\textit{Grid (both cases)}}\\
Load at PCC1 & $R_1$ & 2\\
Shunt capacitors & $C_1, C_2$ & 0.005, 0.005\\
Line 1 & $R_2, L_2$ & 0.015, 0.15\\
Grid line & $R_3, L_3$ & 0.08, 0.4\\
\midrule
\multicolumn{3}{@{}l}{\textit{Excitation converter (both cases)}}\\
$LCL$ filter & $L_\mathrm{f,1}, L_\mathrm{f,2}, C_\mathrm{f}$ & 0.08, 0.05, 0.02\\
$LCL$ resistances, damping & $r_1, r_2, r_\mathrm{d}$ & 0.008, 0.005, 0.4\\
DC link voltage & $v_\mathrm{dc}$ & \SI{1150}{\volt}\\
Current PI gains & $k_p, k_i$ & 0.3, 10\\
$P$ and $Q$ PI gains & $k_p, k_i$ & 0.5, 40 (each)\\
PLL PI gains& $k_p, k_i$ & 6.87, 23.61\\
Set point, excitation ampl. & $P, Q$ & 0.8, 0; $\leq 0.05$\\
\midrule
\multicolumn{3}{@{}l}{\textit{Added grid-following converter (Case~2 only)}}\\
Shunt load at breaker & $R_4$ & 2\\
Line 2 & $R_5, L_5$ & 0.015, 0.15\\
$LC$ filter & $R_\mathrm{f}, L_\mathrm{f}, C_\mathrm{f}$ & 0.01, 0.05, 0.02\\
DC link voltage & $v_\mathrm{dc}$ & \SI{931}{\volt}\\
Current PI gains & $k_p, k_i$ & 0.3, 10\\
$P$ and $Q$ PI gains & $k_p, k_i$ & 0.1, 5 (each)\\
PLL PI gains & $k_p, k_i$ & 6.87, 23.61\\
Set point & $P, Q$ & 0.95, 0.2\\
\midrule
\multicolumn{3}{@{}l}{Equivalent impedance $Z_g(s)$, Case~1: $dq$-symmetric, rational $2\!\times\!2$, order 8}\\
\bottomrule
\end{tabular}
\end{table}

%% file: figures/fig_case1_Gplus.tex
\begin{figure}[t]
\centering
\begin{tikzpicture}
\begin{groupplot}[group style={group size=1 by 2, vertical sep=1.05cm},
                   hilaxis, width=1.02\columnwidth, height=3.6cm,
                   xmin=-1000, xmax=1000, xtick={-1000,-500,0,500,1000}]
\nextgroupplot[ylabel={$|\bG_+|$ [\si{\dB}]}, ymin=-25, ymax=10, ytick={0,-10,-20},
               xticklabels={,,}]
\addplot[truecurve] table[x=f,y=Gp_mag] {figures/data/case1_Gtrue.dat};
\addplot[estmarks] table[x=f,y=Gp_mag] {figures/data/case1_Ghat.dat};
\legend{True, Proposed approach ($R=4$)}
\nextgroupplot[ylabel={$\angle \bG_+$ [degrees]}, ymin=-180, ymax=180,
               ytick={-180,-90,0,90,180}, xlabel={Frequency [\si{\Hz}]}]
\addplot[truecurve] table[x=f,y=Gp_ph] {figures/data/case1_Gtrue.dat};
\addplot[estmarks] table[x=f,y=Gp_ph] {figures/data/case1_Ghat.dat};
\end{groupplot}
\end{tikzpicture}
\caption{Case~1: magnitude and phase of the complex TF $\bG_+$ and an estimate obtained
using the proposed approach via a local model order 4, from a single one-second
record (for clarity, only each 5th estimated frequency is shown). The
antiresonance at $-\SI{50}{\Hz}$ is the image of the $abc$ direct-current
point.}
\label{fig:c1Gp}
\end{figure}

%% file: figures/fig_case1_Gminus.tex
\begin{figure}[t]
\centering
\begin{tikzpicture}
\begin{groupplot}[group style={group size=1 by 2, vertical sep=1.05cm},
                   hilaxis, width=1.02\columnwidth, height=3.6cm,
                   xmin=-1000, xmax=1000, xtick={-1000,-500,0,500,1000}]
\nextgroupplot[ylabel={$|\bG_-|$ [\si{\dB}]}, ymin=-130, ymax=-20, ytick={-40,-70,-100,-130},
               xticklabels={,,}]
\addplot[estmarks] table[x=f,y=Gm_mag] {figures/data/case1_Ghat.dat};
\legend{Proposed approach ($R=4$)}
\nextgroupplot[ylabel={$\angle \bG_-$ [degrees]}, ymin=-180, ymax=180,
               ytick={-180,-90,0,90,180}, xlabel={Frequency [\si{\Hz}]}]
\addplot[estmarks] table[x=f,y=Gm_ph] {figures/data/case1_Ghat.dat};
\end{groupplot}
\end{tikzpicture}
\caption{Case~1: magnitude and phase of the estimate of the coupling TF $\bG_-$. The true
value is zero, and the estimate reflects the noise floor of the identification;
its phase is correspondingly arbitrary. The estimator does not make use of the knowledge
that this grid is $dq$-symmetric; it rather discovers this from the data.}
\label{fig:c1Gm}
\end{figure}

%% file: figures/fig_case1_Z.tex
\begin{figure*}[t]
\centering
\begin{tikzpicture}
\begin{groupplot}[group style={group size=4 by 2, horizontal sep=1.15cm, vertical sep=0.95cm},
                   hilaxis, width=0.265\textwidth, height=3.1cm,
                   xmode=log, xmin=1, xmax=1200, xtick={1,10,100,1000},
                   title style={font=\small, yshift=-0.6ex},
                   legend style={at={(0.02,1.30)}, anchor=south west, legend columns=2}]
\nextgroupplot[ylabel={Magnitude [\si{\dB}]}, ymin=-20, ymax=10, ytick={0,-10,-20},
               title={$\mathstrut Z_{dd}$}, xticklabels={,,}]
\addplot[truecurve] table[x=f,y=dd_mag] {figures/data/case1_Ztrue.dat};
\addplot[estmarks] table[x=f,y=dd_mag] {figures/data/case1_Zhat.dat};
\legend{True, Proposed approach ($R=4$)}
\nextgroupplot[ ymin=-35, ymax=0, ytick={0,-10,-20,-30},
               title={$\mathstrut Z_{dq}$}, xticklabels={,,}]
\addplot[truecurve] table[x=f,y=dq_mag] {figures/data/case1_Ztrue.dat};
\addplot[estmarks] table[x=f,y=dq_mag] {figures/data/case1_Zhat.dat};
\nextgroupplot[ ymin=-35, ymax=0, ytick={0,-10,-20,-30},
               title={$\mathstrut Z_{qd}$}, xticklabels={,,}]
\addplot[truecurve] table[x=f,y=qd_mag] {figures/data/case1_Ztrue.dat};
\addplot[estmarks] table[x=f,y=qd_mag] {figures/data/case1_Zhat.dat};
\nextgroupplot[ ymin=-20, ymax=10, ytick={0,-10,-20},
               title={$\mathstrut Z_{qq}$}, xticklabels={,,}]
\addplot[truecurve] table[x=f,y=qq_mag] {figures/data/case1_Ztrue.dat};
\addplot[estmarks] table[x=f,y=qq_mag] {figures/data/case1_Zhat.dat};
\nextgroupplot[ylabel={Phase [degrees]}, ymin=-30, ymax=60, ytick={-30,0,30,60},
               xlabel={Frequency [\si{\Hz}]}]
\addplot[truecurve] table[x=f,y=dd_ph] {figures/data/case1_Ztrue.dat};
\addplot[estmarks] table[x=f,y=dd_ph] {figures/data/case1_Zhat.dat};
\nextgroupplot[ ymin=-45, ymax=190, ytick={-45,0,90,180},
               xlabel={Frequency [\si{\Hz}]}]
\addplot[truecurve] table[x=f,y=dq_ph] {figures/data/case1_Ztrue.dat};
\addplot[estmarks] table[x=f,y=dq_ph] {figures/data/case1_Zhat.dat};
\nextgroupplot[ ymin=-225, ymax=15, ytick={-180,-90,0},
               xlabel={Frequency [\si{\Hz}]}]
\addplot[truecurve] table[x=f,y=qd_ph] {figures/data/case1_Ztrue.dat};
\addplot[estmarks] table[x=f,y=qd_ph] {figures/data/case1_Zhat.dat};
\nextgroupplot[ ymin=-30, ymax=60, ytick={-30,0,30,60},
               xlabel={Frequency [\si{\Hz}]}]
\addplot[truecurve] table[x=f,y=qq_ph] {figures/data/case1_Ztrue.dat};
\addplot[estmarks] table[x=f,y=qq_ph] {figures/data/case1_Zhat.dat};
\end{groupplot}
\end{tikzpicture}
\caption{Case~1: magnitude and phase of the four real TFs, mapped from the complex
estimates via \eqref{eq:reconstruction}, together with the true values, over
$[1,1200]\,\si{\Hz}$ (for clarity, only each 5th estimated frequency is
shown).}
\label{fig:c1Z}
\end{figure*}

%% file: figures/fig_case2_Gplus.tex
\begin{figure}[t]
\centering
\begin{tikzpicture}
\begin{groupplot}[group style={group size=1 by 2, vertical sep=1.05cm},
                   hilaxis, width=1.02\columnwidth, height=3.6cm,
                   xmin=-1000, xmax=1000, xtick={-1000,-500,0,500,1000}]
\nextgroupplot[ylabel={$|\bG_+|$ [\si{\dB}]}, ymin=-25, ymax=10, ytick={0,-10,-20},
               xticklabels={,,}]
\addplot[truecurve] table[x=f,y=Gp_mag] {figures/data/case2_Gtrue.dat};
\addplot[estmarks] table[x=f,y=Gp_mag] {figures/data/case2_Ghat.dat};
\legend{True, Proposed approach ($R=4$)}
\nextgroupplot[ylabel={$\angle \bG_+$ [degrees]}, ymin=-180, ymax=180,
               ytick={-180,-90,0,90,180}, xlabel={Frequency [\si{\Hz}]}]
\addplot[truecurve] table[x=f,y=Gp_ph] {figures/data/case2_Gtrue.dat};
\addplot[estmarks] table[x=f,y=Gp_ph] {figures/data/case2_Ghat.dat};
\end{groupplot}
\end{tikzpicture}
\caption{Case~2: magnitude and phase of $\bG_+$ and an estimate obtained using the
proposed approach via a local model order 4, with the added grid-following
converter in service (for clarity, only each 5th estimated frequency is
shown).}
\label{fig:c2Gp}
\end{figure}

%% file: figures/fig_case2_Gminus.tex
\begin{figure}[t]
\centering
\begin{tikzpicture}
\begin{groupplot}[group style={group size=1 by 2, vertical sep=1.05cm},
                   hilaxis, width=1.02\columnwidth, height=3.6cm,
                   xmin=-1000, xmax=1000, xtick={-1000,-500,0,500,1000}]
\nextgroupplot[ylabel={$|\bG_-|$ [\si{\dB}]}, ymin=-70, ymax=-10, ytick={-10,-30,-50,-70},
               xticklabels={,,}]
\addplot[truecurve] table[x=f,y=Gm_mag] {figures/data/case2_Gtrue.dat};
\addplot[estmarks] table[x=f,y=Gm_mag] {figures/data/case2_Ghat.dat};
\legend{True, Proposed approach ($R=4$)}
\nextgroupplot[ylabel={$\angle \bG_-$ [degrees]}, ymin=-180, ymax=180,
               ytick={-180,-90,0,90,180}, xlabel={Frequency [\si{\Hz}]}]
\addplot[truecurve] table[x=f,y=Gm_ph] {figures/data/case2_Gtrue.dat};
\addplot[estmarks] table[x=f,y=Gm_ph] {figures/data/case2_Ghat.dat};
\end{groupplot}
\end{tikzpicture}
\caption{Case~2: magnitude and phase of the coupling TF $\bG_-$ and its estimate. The
estimate is corrected for the anti-alias decimator using \eqref{eq:correction}
on the full signed frequency grid (for clarity, only each 5th estimated
frequency is shown).}
\label{fig:c2Gm}
\end{figure}

%% file: figures/fig_case2_Z.tex
\begin{figure*}[t]
\centering
\begin{tikzpicture}
\begin{groupplot}[group style={group size=4 by 2, horizontal sep=1.15cm, vertical sep=0.95cm},
                   hilaxis, width=0.265\textwidth, height=3.1cm,
                   xmode=log, xmin=1, xmax=1200, xtick={1,10,100,1000},
                   title style={font=\small, yshift=-0.6ex},
                   legend style={at={(0.02,1.30)}, anchor=south west, legend columns=2}]
\nextgroupplot[ylabel={Magnitude [\si{\dB}]}, ymin=-20, ymax=10, ytick={0,-10,-20},
               title={$\mathstrut Z_{dd}$}, xticklabels={,,}]
\addplot[truecurve] table[x=f,y=dd_mag] {figures/data/case2_Ztrue.dat};
\addplot[estmarks] table[x=f,y=dd_mag] {figures/data/case2_Zhat.dat};
\legend{True, Proposed approach ($R=4$)}
\nextgroupplot[ ymin=-35, ymax=0, ytick={0,-10,-20,-30},
               title={$\mathstrut Z_{dq}$}, xticklabels={,,}]
\addplot[truecurve] table[x=f,y=dq_mag] {figures/data/case2_Ztrue.dat};
\addplot[estmarks] table[x=f,y=dq_mag] {figures/data/case2_Zhat.dat};
\nextgroupplot[ ymin=-35, ymax=0, ytick={0,-10,-20,-30},
               title={$\mathstrut Z_{qd}$}, xticklabels={,,}]
\addplot[truecurve] table[x=f,y=qd_mag] {figures/data/case2_Ztrue.dat};
\addplot[estmarks] table[x=f,y=qd_mag] {figures/data/case2_Zhat.dat};
\nextgroupplot[ ymin=-20, ymax=10, ytick={0,-10,-20},
               title={$\mathstrut Z_{qq}$}, xticklabels={,,}]
\addplot[truecurve] table[x=f,y=qq_mag] {figures/data/case2_Ztrue.dat};
\addplot[estmarks] table[x=f,y=qq_mag] {figures/data/case2_Zhat.dat};
\nextgroupplot[ylabel={Phase [degrees]}, ymin=-30, ymax=60, ytick={-30,0,30,60},
               xlabel={Frequency [\si{\Hz}]}]
\addplot[truecurve] table[x=f,y=dd_ph] {figures/data/case2_Ztrue.dat};
\addplot[estmarks] table[x=f,y=dd_ph] {figures/data/case2_Zhat.dat};
\nextgroupplot[ ymin=-45, ymax=190, ytick={-45,0,90,180},
               xlabel={Frequency [\si{\Hz}]}]
\addplot[truecurve] table[x=f,y=dq_ph] {figures/data/case2_Ztrue.dat};
\addplot[estmarks] table[x=f,y=dq_ph] {figures/data/case2_Zhat.dat};
\nextgroupplot[ ymin=-225, ymax=15, ytick={-180,-90,0},
               xlabel={Frequency [\si{\Hz}]}]
\addplot[truecurve] table[x=f,y=qd_ph] {figures/data/case2_Ztrue.dat};
\addplot[estmarks] table[x=f,y=qd_ph] {figures/data/case2_Zhat.dat};
\nextgroupplot[ ymin=-30, ymax=60, ytick={-30,0,30,60},
               xlabel={Frequency [\si{\Hz}]}]
\addplot[truecurve] table[x=f,y=qq_ph] {figures/data/case2_Ztrue.dat};
\addplot[estmarks] table[x=f,y=qq_ph] {figures/data/case2_Zhat.dat};
\end{groupplot}
\end{tikzpicture}
\caption{Case~2: magnitude and phase of the four real TFs together with the true values.
The added converter breaks the symmetry: $Z_{dd}\neq Z_{qq}$ and
$Z_{dq}\neq-Z_{qd}$, most visibly below \SI{100}{\Hz}.}
\label{fig:c2Z}
\end{figure*}

%% file: figures/fig_zoom.tex
\begin{figure*}[t]
\centering
\begin{tikzpicture}
\begin{groupplot}[group style={group size=3 by 2, horizontal sep=1.25cm, vertical sep=0.95cm},
                   hilaxis, width=0.35\textwidth, height=3.2cm,
                   title style={font=\small, yshift=-0.6ex},
                   legend style={at={(0.02,1.30)}, anchor=south west, legend columns=2}]
\nextgroupplot[ylabel={Magnitude [\si{\dB}]}, xmin=-150, xmax=50, xtick={-150,-100,-50,0,50}, ymin=-25, ymax=10,
               ytick={0,-10,-20}, title={$\mathstrut$ $\bG_+$ notch}, xticklabels={,,}]
\addplot[truecurve] table[x=f,y=Gp_mag] {figures/data/case2_zoomGp_true.dat};
\addplot[estmarks] table[x=f,y=Gp_mag] {figures/data/case2_zoomGp_hat.dat};
\legend{True, Proposed approach ($R=4$)}
\nextgroupplot[ xmin=-15, xmax=15, xtick={-15,-5,5,15}, ymin=-45, ymax=-10,
               ytick={-10,-20,-30,-40}, title={$\mathstrut$ $\bG_-$ near $f_g$ image}, xticklabels={,,}]
\addplot[truecurve] table[x=f,y=Gm_mag] {figures/data/case2_zoom0_true.dat};
\addplot[estmarks] table[x=f,y=Gm_mag] {figures/data/case2_zoom0_hat.dat};
\nextgroupplot[ xmin=-90, xmax=-15, xtick={-90,-70,-50,-30}, ymin=-60, ymax=-40,
               ytick={-40,-50,-60}, title={$\mathstrut$ $\bG_-$ minimum}, xticklabels={,,}]
\addplot[truecurve] table[x=f,y=Gm_mag] {figures/data/case2_zoomM50_true.dat};
\addplot[estmarks] table[x=f,y=Gm_mag] {figures/data/case2_zoomM50_hat.dat};
\nextgroupplot[ylabel={Phase [degrees]}, xmin=-150, xmax=50, xtick={-150,-100,-50,0,50}, ymin=-90, ymax=90,
               ytick={-90,-45,0,45,90}, xlabel={Frequency [\si{\Hz}]}]
\addplot[truecurve] table[x=f,y=Gp_ph] {figures/data/case2_zoomGp_true.dat};
\addplot[estmarks] table[x=f,y=Gp_ph] {figures/data/case2_zoomGp_hat.dat};
\nextgroupplot[ xmin=-15, xmax=15, xtick={-15,-5,5,15}, ymin=-180, ymax=180,
               ytick={-180,-90,0,90,180}, xlabel={Frequency [\si{\Hz}]}]
\addplot[truecurve] table[x=f,y=Gm_ph] {figures/data/case2_zoom0_true.dat};
\addplot[estmarks] table[x=f,y=Gm_ph] {figures/data/case2_zoom0_hat.dat};
\nextgroupplot[ xmin=-90, xmax=-15, xtick={-90,-70,-50,-30}, ymin=-180, ymax=-60,
               ytick={-180,-140,-100,-60}, xlabel={Frequency [\si{\Hz}]}]
\addplot[truecurve] table[x=f,y=Gm_ph] {figures/data/case2_zoomM50_true.dat};
\addplot[estmarks] table[x=f,y=Gm_ph] {figures/data/case2_zoomM50_hat.dat};
\end{groupplot}
\end{tikzpicture}
\caption{Case~2, sharp features resolved at \SI{1}{\Hz}, with every estimated
spectral line shown. Left: the $-\SI{50}{\Hz}$ antiresonance of $\bG_+$,
\SI{26.1}{\dB} deep with a \SI{3}{\dB} width of \SI{17.6}{\Hz}. Center:
the near-fundamental peak of $\bG_-$. Right: the $-\SI{50}{\Hz}$ minimum of
$\bG_-$.}
\label{fig:zoom}
\end{figure*}

%% file: sections/table_metrics.tex
\begin{table}[t]
\renewcommand{\arraystretch}{1.12}
\caption{Identification accuracy, one-second record, $R=4$. Real
responses evaluated over $[1,1200]\,\si{\Hz}$ and complex responses over
$[-1,1]\,\si{\kHz}$.}
\label{tab:metrics}
\centering
\footnotesize
\setlength{\tabcolsep}{3.2pt}
\begin{tabular}{@{}l cccc c@{}}
\toprule
& \multicolumn{4}{c}{Fit\% \eqref{eq:fit}} & Relative \\
\cmidrule(lr){2-5}
 & $Z_{dd}$ & $Z_{dq}$ & $Z_{qd}$ & $Z_{qq}$ & $H_\infty$ error \eqref{eq:hinf}\\
\midrule
Case 1 (symmetric)  & 100.00 & 100.00 & 100.00 & 100.00 & 0.0036 \\
Case 2 (asymmetric) & 99.97  & 99.77  & 99.78  & 99.98  & 0.0654 \\
\midrule
& \multicolumn{2}{c}{Fit\% \eqref{eq:fit}} & \multicolumn{2}{c}{Relative $H_\infty$ error} & \\
\cmidrule(lr){2-3}\cmidrule(lr){4-5}
 & $\bG_+$ & $\bG_-$ & $\bG_+$ & $\bG_-$ & \\
\midrule
Case 1 (symmetric)  & 100.00 & --\textsuperscript{a} & 0.0071 & --\textsuperscript{a} & \\
Case 2 (asymmetric) & 99.99  & 85.57 & 0.0374 & 0.6218 & \\
\bottomrule
\end{tabular}

\vspace{0.4ex}
{\scriptsize \textsuperscript{a}\,Undefined: the true $\bG_-$ is identically
zero, so the normalizations in \eqref{eq:fit} and \eqref{eq:hinf} vanish. The
estimate is reported as an absolute noise floor in the text instead.}
\end{table}

%% file: sections/table_order.tex
\begin{table}[t]
\renewcommand{\arraystretch}{1.12}
\caption{Accuracy against the local model order $R$, with $\ell = 4R+2$, over
$[1,1200]\,\si{\Hz}$.}
\label{tab:order}
\centering
\footnotesize
\setlength{\tabcolsep}{3.6pt}
\begin{tabular}{@{}c l cccc c@{}}
\toprule
& & \multicolumn{4}{c}{Fit\%} & Relative \\
\cmidrule(lr){3-6}
$R$ & Case & $Z_{dd}$ & $Z_{dq}$ & $Z_{qd}$ & $Z_{qq}$ & $H_\infty$ error\\
\midrule
\multirow{2}{*}{4}  & 1 (symmetric)  & 100 & 100 & 100 & 100 & 0.0036 \\
                    & 2 (asymmetric) & 99.974 & 99.767 & 99.779 & 99.978 & 0.0654 \\
\midrule
\multirow{2}{*}{6}  & 1 (symmetric)  & 100 & 100 & 100 & 100 & 0.0036 \\
                    & 2 (asymmetric) & 99.975 & 99.765 & 99.791 & 99.978 & 0.0615 \\
\midrule
\multirow{2}{*}{8}  & 1 (symmetric)  & 100 & 100 & 100 & 100 & 0.0034 \\
                    & 2 (asymmetric) & 99.975 & 99.771 & 99.797 & 99.978 & 0.0631 \\
\midrule
\multirow{2}{*}{10} & 1 (symmetric)  & 100 & 100 & 100 & 100 & 0.0034 \\
                    & 2 (asymmetric) & 99.976 & 99.765 & 99.797 & 99.978 & 0.0571 \\
\bottomrule
\end{tabular}
\end{table}

%% file: sections/table_order_complex.tex
\begin{table}[t]
\renewcommand{\arraystretch}{1.12}
\caption{Accuracy of the complex TF estimates against the local model order $R$,
with $\ell = 4R+2$, over $[-1,1]\,\si{\kHz}$.}
\label{tab:ordercomplex}
\centering
\footnotesize
\setlength{\tabcolsep}{4pt}
\begin{tabular}{@{}c l cc cc@{}}
\toprule
& & \multicolumn{2}{c}{$\bG_+$} & \multicolumn{2}{c}{$\bG_-$}\\
\cmidrule(lr){3-4}\cmidrule(lr){5-6}
$R$ & Case & Fit\% & Rel.\ $H_\infty$ & Fit\% & Rel.\ $H_\infty$\\
\midrule
\multirow{2}{*}{4}  & 1 (symmetric)  & 100 & 0.0071 & --\textsuperscript{a} & --\textsuperscript{a} \\
                    & 2 (asymmetric) & 99.988 & 0.0374 & 85.57 & 0.6218 \\
\midrule
\multirow{2}{*}{6}  & 1 (symmetric)  & 100 & 0.0066 & --\textsuperscript{a} & --\textsuperscript{a} \\
                    & 2 (asymmetric) & 99.988 & 0.0340 & 87.26 & 0.5933 \\
\midrule
\multirow{2}{*}{8}  & 1 (symmetric)  & 100 & 0.0064 & --\textsuperscript{a} & --\textsuperscript{a} \\
                    & 2 (asymmetric) & 99.988 & 0.0346 & 88.23 & 0.5928 \\
\midrule
\multirow{2}{*}{10} & 1 (symmetric)  & 100 & 0.0046 & --\textsuperscript{a} & --\textsuperscript{a} \\
                    & 2 (asymmetric) & 99.988 & 0.0308 & 88.94 & 0.5778 \\
\bottomrule
\end{tabular}

\vspace{0.4ex}
{\scriptsize \textsuperscript{a}\,Undefined: the true $\bG_-$ is identically
zero in Case~1, so the normalizations in \eqref{eq:fit} and \eqref{eq:hinf}
vanish.}
\end{table}

%% file: sections/conclusions.tex
\section{Conclusions}\label{sec:conclusions}

The paper developed a non-parametric frequency-domain method for
identifying $dq$-asymmetric grid impedances from a {single arbitrary record}. The equivalent impedance is parameterized by a pair of SISO
complex transfer functions, and each spectral line is fitted with a local
rational model in which the transient and leakage contributions are estimated
jointly with the responses rather than suppressed by windowing. This removes
both the requirement for periodic steady-state data and the requirement for
sequential perturbation, leaving the record length governed only by the desired
frequency resolution and by the SNR.

The exact finite-time DFT relation of the measured currents and voltages was proved for the conjugate-coupled complex-signal case. It allows the transient term to be estimated together with the system responses. The identifiability and excitation conditions for the proposed method were also established. It was also shown that a 
stationary frame dynamical filter (e.g., anti-alias/decimation filter)  distorts the estimate of
$\Gm$, but not $\Gp$. This occurs even when both use identical transducers and filters. The method was validated on a  HIL platform against an analytically
derived small-signal model, for a passive $dq$-symmetric grid and for the same
grid with an added grid-following converter that renders it asymmetric. The
responses of both complex TFs and of the four real TFs were recovered over a
wide band from a single one-second record at \SI{1}{\Hz} frequency resolution.

%% file: sections/appendix.tex
\section{Conjugation and Reversal of the DFT}\label{app:conj}
\vspace{-0.15cm}
Let $\{\bi(t_n)\}$, $n \in \{0,\dots,N-1\}$, be a complex-valued record with
$N$-point DFT $\bI\!_k$ given by the definition in \eqref{eq:dft_spectrum}. The DFT of the conjugated record
$\{\bi^\ast(t_n)\}$ at the line $k$ is
\[
\begin{aligned}
\frac{1}{\sqrt{N}} \sum_{n=0}^{N-1} \bi^\ast(t_n)\, & \eu^{-j 2\pi k n /N}
 = \left[\frac{1}{\sqrt{N}} \sum_{n=0}^{N-1} \bi(t_n)\,
    \eu^{+j 2\pi k n /N}\right]^{\!\ast}\\[0.2em]
 & = \left[\frac{1}{\sqrt{N}} \sum_{n=0}^{N-1} \bi(t_n)\,
    \eu^{-j 2\pi (-k) n /N}\right]^{\!\ast}
  = \bI^{\!\ast}_{-k},
\end{aligned}
\]
where $\bI\!_{-k}$ is understood as \eqref{eq:dft_spectrum}
evaluated at the index $-k$. Since $\eu^{-j 2\pi (k+N) n /N} = \eu^{-j 2\pi k n /N}$
for every integer $n$, that sum is $N$-periodic in $k$, so
$\bI\!_{-k} = \bI\!_{N-k}$, and folding the index into $\{0,\dots,N-1\}$ gives
\begin{equation}\label{eq:conjrev}
\frac{1}{\sqrt{N}} \sum_{n=0}^{N-1} \bi^\ast(t_n)\, \eu^{-j 2\pi k n /N}
= \bI^{\!\ast}_{(N-k)_N},
\end{equation}
with  $ k \in \{0,\dots,N-1\}$. Notice that $(N-0)_N = 0$ and \eqref{eq:conjrev}
reduces to $\bI^{\!\ast}_0$: the line at $k=0$ is its own mirror under the index
reversal, though for a complex record $\bI_0$ is not real.

If the signal is real, then $\bi^\ast = \bi$, the left
side of \eqref{eq:conjrev} is $\bI\!_k$ itself, and \eqref{eq:conjrev} becomes
the Hermitian symmetry $\bI\!_k = \bI^{\!\ast}_{(N-k)_N}$, which halves the
number of independent lines. For the complex records considered here the two
spectra are distinct.

\section{Proof of Theorem~\ref{thm:dft}}\label{app:dft}
\vspace{-0.15cm}
Define the Fourier and convolution integrals
\begingroup\makeatletter\def\f@size{9}\check@mathfonts
\[
\F_a^b\{\bv\}(\omega) \!:=\!\!\int_a^b\! \bv(t) \eu^{-j\omega t} \diff t, \quad
(\bg\ast \bi)_a^b(t)  \!:=\!\! \int_a^b\! \bg(t-\tau) \bi(\tau) \diff \tau,
\]
\endgroup
and drop the arguments $\omega$ and $t$ where no confusion arises. The
$dq$-symmetric case is treated first; the general case follows by applying it to
the responses of $\bG_+$ and $\bG_-$ separately.\vspace{-0.3cm}

\subsection{The symmetric case}

Write $\bv(t) = (\bg\ast \bi)_0^t  + (\bg\ast \bi)_{-\infty}^0$, where $\bg$ is
the impulse response of the stable and \textit{causal} $\bG$ and $t\in[0,T]$
with $T=NT_s$. Then
\[
\begin{aligned}
    \bV(\omega)   &:= \F_{-\infty}^\infty\{\shat \bv\} = \F_0^T\{\bv\}  \\
                  &=  \F_0^T\{(\bg\ast \bi)_0^t\} +  \F_0^T\{(\bg\ast \bi)_{-\infty}^0\} \\
                  & =  \F_0^T\{(\bg\ast \bi)_0^\infty\} -\underbrace{\F_0^T\{(\bg\ast \bi)_t^\infty\}}_{= \, 0 \text{ by causality of $\bG$}} +  \F_0^T\{(\bg\ast \bi)_{-\infty}^0\} \\[0.3em]
                  &=   \F_0^\infty\{(\bg\ast \bi)_0^\infty\} -\F_T^\infty\{(\bg\ast \bi)_0^\infty\} +\F_0^T\{(\bg\ast \bi)_{-\infty}^0\}\\
                  &= \bG(j\omega) \bI(\omega) - \bV_{\!\text{fin}}(\omega) + \bV_{\!\text{init}}(\omega).
\end{aligned}
\]
Here $\bV(\omega)$ and $\bI(\omega)$ are the spectra of the complex
continuous-time signals $\shat \bv$ and $\shat\bi$, defined as equal to $\bv$
and $\bi$ on $[0,T]$ and zero elsewhere. In the third line, the second term
vanishes for causal $\bG$, since then $\bg(t) = 0$ for all $t<0$. The last
equality uses the convolution theorem,
$\F_0^\infty\{(\bg\ast \bi)_0^\infty\} = \bG(j\omega)\F_0^\infty\{\bi(t)\}$,
together with $\bi(t) = 0$ for $t>T$, a benign assumption because $\bG$ is
causal. The leakage terms are
$\bV_{\!\text{fin}}(\omega):= \F_T^\infty\{(\bg\ast \bi)_0^\infty\}$ and
$\bV_{\!\text{init}}(\omega) :=\F_0^T\{(\bg\ast \bi)_{-\infty}^0\}$,   corresponding to the final and initial conditions, respectively.

It remains to connect the continuous spectra to the discrete ones in
\eqref{eq:dft_spectrum}.  Sampling of continuous functions is represented mathematically as the convolution of the continuous function with impulses $\delta(t_n)$ defined at the sampling point $t_n$. Expanding the impulse train $\sum_n \delta(t+nT)$ in a
Fourier series with $\omega_1 := 2\pi/T$, convolving with $\shat\bv(t)$ and
invoking the Fourier symmetry theorem yields Poisson's summation formula
\cite[pg.~75, (3-87) \& Note 2]{papoulis1977signal},
\[
\sum_{n = -\infty}^\infty \bV(\omega+n\omega_s) = \frac{2\pi}{\omega_s} \sum_{n=0}^{N} {c(n)}{\bv(nT_s)} \eu^{-jnT_s\omega},
\]
in which the sum on the right is finite because $\bV$ is a finite-time Fourier
integral over $[0,T]$, and the weight $c(n) = \tfrac12$ for $n\in\{0,N\}$ and $1$
otherwise accounts for the discontinuity of $\shat\bv$ at the window end points.
Sampling the spectra uniformly with resolution $2\pi/T$ at $\omega_k :=2\pi k/T$
and rearranging using \eqref{eq:dft_spectrum} gives,
\[
\bV\!_k = \frac{1}{\sqrt{N} T_s}\sum_{n = -\infty}^\infty \bV\!\left(\omega_k - n \omega_s\right) + \frac{\bv(0) - \bv(T)}{2\sqrt{N}},
\]
and the same relation holds for $\bI_k, \bV_{\!\text{init},k}$ and
$\bV_{\!\text{fin},k}$. Collecting terms gives
$\bV\!_k = \bG(j\omega_k) \bI_k + \bT(j\omega_k)$ with
$\bT(j\omega_k) = \bV_{\!\text{init},k} - \bV_{\!\text{fin},k} + \boldsymbol{\alpha}_k$,
where $\boldsymbol{\alpha}_k$ carries the aliasing.

\subsection{The asymmetric case}

Split \eqref{eq:complex_model} as $\bv(t) = \bv_+(t) + \bv_-(t)$ with
\[
\begin{aligned}
\bv_+(t) &= (\bg_+\ast \bi)_0^t  \;\,+ (\bg_+\ast \bi)_{-\infty}^0,\\
\bv_-(t) &= (\bg_-\ast \bi^\ast)_0^t  + (\bg_-\ast \bi^\ast)_{-\infty}^0,
\end{aligned}
\]
where $\bg_+$ and $\bg_-$ are the impulse responses of the causal $\bG_+$ and
$\bG_-$. The symmetric-case result applies to each, giving
\[
\begin{aligned}
\bV\!_{+k} &= \bG_+(j\omega_k) \bI_k + \bT_+(j\omega_k),\\
\bV\!_{-k} &= \bG_-(j\omega_k) \bI_{(N-k)_N}^\ast + \bT_-(j\omega_k).
\end{aligned}
\]
The reversed and conjugated spectrum in the second line arises because
$\F_0^\infty\{(\bg\ast \bi^\ast)_0^\infty\} = \bG(j\omega)[\bI(-\omega)]^\ast$,
which follows from
$\bI^\ast(\omega) := \F_0^\infty\{\bi^\ast(t)\} = [\bI(-\omega)]^\ast$; on the
sampled grid $-\omega_k$ is the line $(N-k)_N$ by
Appendix~\ref{app:conj}. Adding the two gives
\eqref{eq:dft_relation} with $\bT = \bT_+ + \bT_-$.

The transient term carries two effects: the leakage in
$\bV_{\!\text{init},k}$ and $\bV_{\!\text{fin},k}$, unforced decaying responses
due to the initial and final conditions, therefore governed by the same dynamics as $\bG_+$ and $\bG_-$;
and the aliasing in $\boldsymbol{\alpha}_k$, which carries an infinite
repetition of those poles through the folding of the spectrum. Both
are smooth over a short frequency interval; therefore, $\bT$ can be approximated by a low-order local rational model alongside the responses.
The $\mathcal{O}(N^{-1/2})$ decay follows from the $1/\sqrt{N}$ normalization in
\eqref{eq:dft_spectrum} and the boundedness of the end-point terms.

\section{Rank Condition on \texorpdfstring{$\Phi_k$}{Phi k}}\label{app:rank}
\vspace{-0.15cm}
The condition stated in Section~\ref{sec:excitation} is the following.

\begin{lemma}\label{lem:excitation}
Let \eqref{eq:count} hold. If $\Phi_k$ is rank deficient, then there exist
complex polynomials $\alpha, \beta^+, \beta^-, \gamma$ of degree at most $R$,
with $\alpha(0) = 0$ and not all zero, such that
\begin{equation}\label{eq:degeneracy}
\alpha(r)\bV\!_{k+r} = \beta^+(r) \bI_{k+r} + \beta^-(r) \bI^\ast_{(N-k-r)_N} + \gamma(r)
\end{equation}
for every $r \in \{-\ell,\dots,\ell\}$. Conversely, if the local current
spectrum together with its mirror satisfies \eqref{eq:ratform} for polynomials
of degree at most $R$ not all zero, then $\Phi_k$ is rank deficient and the
solution is not unique.
\end{lemma}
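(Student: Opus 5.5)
The plan is to read rank deficiency of $\Phi_k$ directly as the existence of a nonzero vector in its kernel, and to translate that vector back into polynomial coefficients through the block structure of \eqref{eq:datamatrix}. Write a generic vector as $\eta = [\eta_a^\top\ \eta_+^\top\ \eta_-^\top\ \eta_c^\top]^\top$ with $\eta_a\in\mathbb{C}^R$ and $\eta_\pm,\eta_c\in\mathbb{C}^{R+1}$. Row $r$ of $\Phi_k\eta$ then equals
\[
-\bV\!_{k+r}\,\tilde\phi(r)^\top\eta_a + \bI_{k+r}\,[1\ \tilde\phi(r)^\top]\eta_+ + \bI^\ast_{(N-k-r)_N}\,[1\ \tilde\phi(r)^\top]\eta_- + [1\ \tilde\phi(r)^\top]\eta_c .
\]
Define $\alpha(r):=\tilde\phi(r)^\top\eta_a$, which has no constant term, so $\alpha(0)=0$. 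Likewise define $\beta^\pm(r):=[1\ \tilde\phi(r)^\top]\eta_\pm$ and $\gamma(r):=[1\ \tilde\phi(r)^\top]\eta_c$. The map $\eta\mapsto(\alpha,\beta^+,\beta^-,\gamma)$ is a linear bijection onto the coefficient space of polynomials of degree at most $R$ with $\alpha(0)=0$. Hence $\eta\neq 0$ exactly when the tuple is not all zero.

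For the forward direction, rank deficiency means the columns are dependent, so there is $\eta\neq0$ with $\Phi_k\eta=0$. Reading off each row $r\in\{-\ell,\dots,\ell\}$ gives \eqref{eq:degeneracy} with the polynomials defined above, and these are not all zero because $\eta\neq0$. The counting condition \eqref{eq:count} plays no role in this step. It only guarantees that there are at least as many rows as columns, so that rank deficiency means column-rank deficiency, the case that matters for uniqueness of \eqref{eq:LS_sol}.

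For the converse, suppose \eqref{eq:ratform} holds with $(\beta^+,\beta^-,\gamma)$ not all zero. Take $\alpha\equiv0$, i.e. $\eta_a=0$, and choose $\eta_\pm,\eta_c$ as the coefficient vectors of $\beta^\pm$ and $\gamma$. This is possible because $[1\ \tilde\phi(r)^\top]$ is the monomial basis up to degree $R$. Then every row of $\Phi_k\eta$ equals the left side of \eqref{eq:ratform}, which is zero, and $\eta\neq0$. So $\Phi_k$ has a nontrivial kernel and is rank deficient. Moreover, $\hat\theta_k+t\eta$ attains the same residual for all $t$, so the least-squares solution is not unique. Equivalently, the columns associated with the input are linearly combined into those of $\bC_k$, which is the indistinguishability claimed in the text.

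The main point needing care is the bookkeeping of the normalization $a_0(k)=1$. Because the $\bA_k$ block uses $\widetilde\Phi$ rather than $\Phi$, $\alpha$ lacks a constant term. One must check that this matches the statement's requirement $\alpha(0)=0$. One must also check that the bijection between kernel vectors and polynomial tuples respects this constraint, so that "not all zero" is exactly equivalent to $\eta\neq0$. A secondary subtlety is the mirrored index ordering in $\bU^-_k$. Its entries, listed from $r=-\ell$ to $r=\ell$, are $\bI^\ast_{(N-k-r)_N}$, so the diagonal scaling pairs row $r$ with the correct mirrored line. This should be verified against the definition of $\bU^-_k$ before reading off the rows.
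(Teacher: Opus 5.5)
Your proposal is correct and follows the paper's own argument: a nonzero kernel vector of $\Phi_k$ is read row by row as the relation \eqref{eq:degeneracy}, with $\alpha(0)=0$ forced by the $\widetilde{\Phi}$ block, and the converse is the special case $\alpha\equiv 0$. Your explicit coefficient bijection, which makes ``not all zero'' exactly equivalent to $\eta\neq 0$, is a careful rendering of that same proof, as is your remark that \eqref{eq:count} only serves to make ``rank deficient'' mean column-rank deficiency.
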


\begin{IEEEproof}
The proof is based on linear algebra arguments. Let $\Phi_k \theta = 0$ for some
$\theta \neq 0$, and collect its entries into the polynomials $\alpha(r)$,
$\beta^\pm(r)$ and $\gamma(r)$ of degree at most $R$. Here $\alpha(0) := 0$
because the first block of \eqref{eq:datamatrix} uses $\widetilde{\Phi}$. Reading the row of $\Phi_k$ indexed by $r$ then gives
\eqref{eq:degeneracy}. Conversely, \eqref{eq:ratform} is the special case
$\alpha \equiv 0$, for which the corresponding $\theta$ is non-zero by
construction, so the columns of $\Phi_k$ are linearly dependent.
\end{IEEEproof}

Substituting \eqref{eq:dft_relation} into \eqref{eq:degeneracy} shows what the
condition of the Lemma amounts to. The degeneracy is equivalent to
\[
\bI_{k+r}\big[\beta^+ - \alpha \bG_+\big] + \bI^\ast_{(N-k-r)_N}\big[\beta^- - \alpha \bG_-\big] + \big[\gamma - \alpha \bT\big] = 0
\]
over the interval, that is, to a polynomial relation of degree at most $R$
between the local current spectrum, its mirror, and a constant sequence.

\section{The DFT Line at Zero Frequency}\label{app:dcbin}

The following simple  example shows why the line at $\omega = 0$ has to be removed
from the local interval. Take $N = 8$ spectral lines, $R = 1$ and $\ell = 3$, so
that $r \in \{-3,\dots,3\}$ and the interval carries $2\ell+1 = 7$ data
equations for $4R+3 = 7$ unknowns, and center it on $k = 0$. Using the
periodicity of the DFT, the direct and mirrored indices over the interval are
\[
\begin{gathered}
k+r \;\rightarrow\; 5,\,6,\,7,\,0,\,1,\,2,\,3,\\
(N-k-r)_N \;\rightarrow\; 3,\,2,\,1,\,0,\,7,\,6,\,5 .
\end{gathered}
\]
With $\theta_0 = [\,a_1\;\; b_0^+\;\; b_1^+\;\; b_0^-\;\; b_1^-\;\; c_0\;\; c_1\,]^\top$,
the local model $\bY\!_0 = \Phi_0\theta_0$ reads
\begin{equation*}
\resizebox{\columnwidth}{!}{$
\begin{bmatrix}
\bV\!_5\\ \bV\!_6\\ \bV\!_7\\ \bV\!_0\\ \bV\!_1\\ \bV\!_2\\ \bV\!_3
\end{bmatrix}
=
\begin{bmatrix}
 3\bV\!_5 & \bI_5 & -3\bI_5 & \bI^\ast_3 & -3\bI^\ast_3 & 1 & -3\\
 2\bV\!_6 & \bI_6 & -2\bI_6 & \bI^\ast_2 & -2\bI^\ast_2 & 1 & -2\\
  \bV\!_7 & \bI_7 &  -\bI_7 & \bI^\ast_1 &  -\bI^\ast_1 & 1 & -1\\
        0 & \bI_0 &       0 & \bI^\ast_0 &            0 & 1 &  0\\
 -\bV\!_1 & \bI_1 &   \bI_1 & \bI^\ast_7 &   \bI^\ast_7 & 1 &  1\\
 -2\bV\!_2 & \bI_2 &  2\bI_2 & \bI^\ast_6 &  2\bI^\ast_6 & 1 &  2\\
 -3\bV\!_3 & \bI_3 &  3\bI_3 & \bI^\ast_5 &  3\bI^\ast_5 & 1 &  3
\end{bmatrix}
\theta_0
$}
\end{equation*}
The fourth row is the one at $r = 0$, that is, at the line $k = 0$. Every entry
weighted by $r$ vanishes there.  The row therefore reduces to
\[
\bV\!_0 = b_0^+ \bI_0 + b_0^- \bI_0^\ast + c_0.
\]
Moreover, the small-signal records have their mean removed, as
described in Section~\ref{sec:formulation}, so $\bV\!_0 = \bI_0 = 0$ and the row
becomes $c_0 = 0$. The constant coefficient of $\bC_k$ is thus forced to zero
irrespective of the data, and the estimate of the leakage term is biased in every local
interval that contains this line. Removing the row removes this constraint. The zero 
frequency simply counts as one spectral line, and with the interval widths used
here its loss is immaterial.

\section{Proof of Proposition~\ref{prop:image}}\label{app:distortion}
Write $\tilde{\bV}\!_k$ and $\tilde{\bI}_k$ for the DFT spectra of the records
that the estimator actually sees, that is, after the acquisition chains and the
Park transform. Then, it holds that
$\tilde{\bV}\!_k = C^{v,+}_{k}\bV\!_k + \bT^{v,+}_{k}$ and $\tilde{\bI}_k = C^{i,+}_{k}\bI_k+\bT^{i,+}_{k}$, where the second terms on the right-hand side of the equations represent the leakage.
The conjugate regressor is taken at the mirrored line $(N-k)_N$, whose $dq$
frequency is $-\omega_k$, so the current chain is evaluated at
$\omega_g-\omega_k$ there and
$\tilde{\bI}^\ast_{(N-k)_N} = [{C^{i,-}_{k}}]^\ast\,\bI^\ast_{(N-k)_N}  + [\bT^{i,-}_{k}]^\ast$. %
From this, and using \eqref{eq:dft_relation}
\[
\begin{aligned}
\tilde{\bV}\!_k &= \frac{C^{v,+}_{k}}{C^{i,+}_{k}}\Gp(j\omega_k)\,\tilde{\bI}_k
 + \frac{C^{v,+}_{k}}{[{C^{i,-}_{k}}]^\ast}\Gm(j\omega_k)\,\tilde{\bI}^\ast_{(N-k)_N}  ,\\
 \tilde{\bT}_k &= C^{v,+}_{k}\bT(j\omega_k)+ \bT^{v,+}_{k} \\
 & \qquad\quad -\frac{C^{v,+}_{k}}{C^{i,+}_{k}} \Gp(j\omega_k) \bT^{i,+}_{k} - \frac{C^{v,+}_{k}}{[{C^{i,-}_{k}}]^\ast}  \Gm(j\omega_k) [\bT^{i,-}_{k}]^\ast
\end{aligned}
\]
Wherever the factors are non-zero and vary smoothly over each local interval,
which holds for FIR and transducer responses across the band, the
reparameterization stays within the local model class and preserves the linear
independence of the two input columns of \eqref{eq:datamatrix}.